\newif\ifprocs
\procstrue
\procsfalse 

\ifprocs
\documentclass[a4paper,UKenglish]{lipics}
 


\else
\documentclass[11pt]{article}
\usepackage[letterpaper,hmargin=1.0in,vmargin=1.0in]{geometry}
\fi

\usepackage{amssymb,amsmath}
\usepackage{amsthm,sectsty}


\ifprocs
\else
\usepackage{ifpdf}
\ifpdf    
\usepackage[pdftex,colorlinks,linkcolor=blue,citecolor=blue,filecolor=blue,urlcolor=blue]{hyperref}
\else    
\usepackage[pdftex,colorlinks,linkcolor=blue,citecolor=blue,filecolor=blue,urlcolor=blue]{hyperref}
\fi
\fi
 
\usepackage{cleveref}
\usepackage{color}
\usepackage[boxed]{algorithm}
\usepackage{aliascnt}
 
\ifprocs
\pagestyle{plain} 
\else
\newtheorem{theorem}{Theorem}[section]
\Crefname{theorem}{Theorem}{Theorems}

\newaliascnt{lemma}{theorem}
\newtheorem{lemma}[lemma]{Lemma}
\aliascntresetthe{lemma}
\Crefname{lemma}{Lemma}{Lemmas}

\newaliascnt{corollary}{theorem}

\aliascntresetthe{corollary}
\Crefname{corollary}{Corollary}{Corollaries}

\newaliascnt{fact}{theorem}

\aliascntresetthe{fact}
\Crefname{fact}{Fact}{Facts}

\newaliascnt{definition}{theorem}
\newtheorem{definition}[definition]{Definition}
\aliascntresetthe{definition}
\Crefname{definition}{Definition}{Definitions}

\newaliascnt{remark}{theorem}

\aliascntresetthe{remark}
\Crefname{remark}{Remark}{Remarks}

\newaliascnt{exercise}{theorem}

\aliascntresetthe{exercise}
\Crefname{exercise}{Exercise}{Exercises}

\newaliascnt{example}{theorem}

\aliascntresetthe{example}
\Crefname{example}{Example}{Examples}

\newaliascnt{notation}{theorem}

\aliascntresetthe{notation}
\Crefname{notation}{Notation}{Notations}

\newaliascnt{problem}{theorem}

\aliascntresetthe{problem}
\Crefname{problem}{Problem}{Problems}
\fi

\theoremstyle{plain}

\newaliascnt{conjecture}{theorem}
\newtheorem{conjecture}[conjecture]{Conjecture}
\aliascntresetthe{conjecture}
\Crefname{conjecture}{Conjecture}{Conjectures}

\newaliascnt{question}{theorem}
\newtheorem{question}[question]{Question}
\aliascntresetthe{question}
\Crefname{question}{Question}{Questions}

\newaliascnt{claim}{theorem}
\newtheorem{claim}[claim]{Claim}
\aliascntresetthe{claim}
\Crefname{claim}{Claim}{Claims}

\newaliascnt{proposition}{theorem}
\newtheorem{proposition}[proposition]{Proposition}
\aliascntresetthe{proposition}
\Crefname{proposition}{Proposition}{Propositions}

\ifprocs
\newaliascnt{lemma}{theorem}
\newtheorem{mylemma}[lemma]{Lemma}
\aliascntresetthe{lemma}
\Crefname{mylemma}{Lemma}{Lemmas}
\else
\newaliascnt{mylemma}{theorem}
\newtheorem{mylemma}[lemma]{Lemma}
\aliascntresetthe{lemma}
\Crefname{mylemma}{Lemma}{Lemmas}
\fi

 
\newcommand{\norm}[1]{\lVert#1\rVert}

\def\E{\mathbb E}


\newcommand{\R}{\mathbb R}

\newcommand\polylog[1]{\ensuremath{\mathrm{polylog}\left(#1\right)}}

\renewcommand\epsilon{\varepsilon}

\providecommand{\eqdef}{:=}

\providecommand{\card}[1]{\lvert#1\rvert}
\providecommand{\norm}[1]{\lVert#1\rVert}

\newcommand\tO{\ensuremath{\tilde O}}

\title{Towards Resistance Sparsifiers}

\ifprocs
\author[1]{Michael Dinitz\footnote{Work supported in part by NSF award \#1464239.}}
\author[2]{Robert Krauthgamer\footnote{Work supported in part by a US-Israel BSF grant \#2010418
and an Israel Science Foundation grant \#897/13.}}
\author[3]{Tal Wagner}
\affil[1]{Johns Hopkins University, Baltimore, MD.\\ \texttt{mdinitz@cs.jhu.edu}}
\affil[2]{Weizmann Institute of Science, Rehovot, Israel.\\ \texttt{robert.krauthgamer@weizmann.ac.il}}
\affil[3]{Massachussetts Institute of Technology, Cambridge, MA.\\ \texttt{talw@mit.edu}}
\authorrunning{M. Dinitz and R. Krauthgamer and T. Wagner} 

\Copyright{Michael Dinitz and Robert Krauthgamer and Tal Wagner}

\subjclass{G.2.2 Graph Theory}
\keywords{edge sparsification, spectral sparsifier, graph expansion, 
effective resistance, commute time}

\serieslogo{}
\volumeinfo
  {Billy Editor and Bill Editors}
  {2}
  {Conference title on which this volume is based on}
  {1}
  {1}
  {1}
\EventShortName{}
\DOI{10.4230/LIPIcs.xxx.yyy.p}
\else
\author{Michael Dinitz\thanks{Johns Hopkins University, Baltimore, MD. Work supported in part by NSF award \#1464239. Email: \texttt{mdinitz@cs.jhu.edu}}
\and Robert Krauthgamer%
\thanks{Weizmann Institute of Science, Rehovot, Israel. 
Work supported in part by a US-Israel BSF grant \#2010418
and an Israel Science Foundation grant \#897/13.
Email: \texttt{robert.krauthgamer@weizmann.ac.il}
} 
\and Tal Wagner\thanks{Massachussetts Institute of Technology, Cambridge, MA.  Email: \texttt{talw@mit.edu}}
}
\fi

\begin{document}
\maketitle

\begin{abstract}
We study \emph{resistance sparsification} of graphs, in which the goal is to find a sparse subgraph (with reweighted edges) that approximately preserves the effective resistances between every pair of nodes. 
We show that every dense regular expander admits 
a $(1+\epsilon)$-resistance sparsifier of size $\tilde O(n/\epsilon)$, 
and conjecture this bound holds for all graphs on $n$ nodes. 
In comparison, spectral sparsification is a strictly stronger notion
and requires $\Omega(n/\epsilon^2)$ edges even on the complete graph.

Our approach leads to the following structural question on graphs: Does every dense regular expander contain a sparse regular expander as a subgraph? Our main technical contribution, which may of independent interest, is a positive answer to this question in a certain setting of parameters. Combining this with a recent result of von Luxburg, Radl, and Hein~(JMLR, 2014) leads to the aforementioned resistance sparsifiers.
\end{abstract}

\section{Introduction}

Compact representations of discrete structures are of fundamental importance,
both from an applications point of view and from a purely mathematical perspective. 
Graph sparsification is perhaps one of the simplest examples: 
given a graph $G(V, E)$, is there a subgraph that represents $G$ truthfully, say up to a small approximation? 
This notion has had different names in different contexts, 
depending on the property that is being preserved: 
preserving distances is known as a \emph{graph spanner}~\cite{PS89}, preserving the size of cuts is known as a \emph{cut sparsifier}~\cite{BK96}, while preserving spectral properties is known as a \emph{spectral sparsifier}~\cite{ST04a}. 
These concepts are known to be related, for example, 
every spectral sparsifier is clearly also a cut sparsifier,
and spectral sparsifiers can be constructed by an appropriate sample of
spanners \cite{KP12}.

Our work is concerned with sparsification that preserves \emph{effective resistances}.  
We define this in Section~\ref{sec:results}, 
but informally the effective resistance between two nodes $u$ and $v$ 
is the voltage differential between them when we regard the graph 
as an electrical network of resistors with
one unit of current injected at $u$ and extracted at $v$. 
Effective resistances are very useful in many applications 
that seek to cluster nodes in a network
(see \cite{vLRH14} and references therein for a comprehensive list),
and are also of fundamental mathematical interest.
For example, they have deep connections to random walks on graphs 
(see~\cite{Lov96} for an excellent overview of this connection).  
Most famously, the commute time between two nodes $u$ and $v$ (the expected time for a random walk starting at $u$ to hit $v$ plus the expected time for a random walk starting at $v$ to hit $u$) is exactly $2m$ times the effective resistance between $u$ and $v$,
where throughout $n\eqdef\card V$ and $m\eqdef\card E$.
Hence, we are concerned with sparsification which preserves commute times.

We ask whether graphs admit a good \emph{resistance sparsifier}: 
a reweighted subgraph $G'(V,E',w')$ in which the effective resistances 
are equal, up to a $(1+\epsilon)$-factor, to those in the original graph.
The short answer is yes, because every $(1+\epsilon)$-spectral sparsifier 
is also a $(1+\epsilon)$-resistance sparsifier.  
Using the spectral-sparsifiers of~\cite{BSS12}, we immediately conclude 
that every graph admits a $(1+\epsilon)$-resistance sparsifier 
with $O({n}/{\epsilon^2})$ edges.  

Interestingly, the same $1/\epsilon^2$ factor loss appears even 
when we interpret ``sparsification'' far more broadly.  
For example, a natural approach to compressing the effective resistances is to use a metric embedding (instead of looking for a subgraph): 
map the nodes into some metric, and use the metric's distances 
as our resistance estimates.  
This approach is particularly attractive since it is well-known that 
effective resistances form a metric space which embeds isometrically 
into $\ell_2$-squared (i.e., the metric is of negative type, see e.g.~\cite{DL97}).  
Hence, using the Johnson-Lindenstrauss dimension reduction lemma, 
we can represent effective resistances up to a distortion of $(1+\epsilon)$ using vectors of dimension $O(\epsilon^{-2} \log n)$, 
i.e., using total space $\tO({n}/{\epsilon^2})$.  
In fact, this very approach was used by~\cite{SS11} to quickly compute effective resistance estimates, which were then used to construct a spectral sparsifier. 

Since a ${1}/{\epsilon^2}$ term appears in both of these natural ways to compactly represent effective resistances, 
an obvious question is whether this is \emph{necessary}.  
For the stronger requirement of spectral sparsification, 
we know the answer is yes -- every spectral sparsifier of the complete graph requires $\Omega(n/\epsilon^2)$ edges~\cite[Section 4]{BSS12}
(see also \cite{AKW14}).
However, it is currently unknown whether such a bound holds also for resistance sparsifiers, and the starting point of our work is the observation (based on~\cite{vLRH14}) that for the complete graph, 
every $O({1}/{\epsilon})$-regular expander is a $(1+\epsilon)$-resistance sparsifier, despite not being a $(1+\epsilon)$-spectral sparsifier!  
We thus put forward the following conjecture.

\begin{conjecture} \label{conj:resistance}
Every graph admits a $(1+\epsilon)$-resistance sparsifier 
with $\tO(n/\epsilon)$ edges.
\end{conjecture}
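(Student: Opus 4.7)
The plan is to extend the paper's theorem on dense regular expanders to arbitrary graphs via a two-stage reduction: first reduce the edge count by a spectral sparsifier, then apply an expander decomposition and re-sparsify each resulting expander using the paper's main technique. The intuition is that effective resistance is a ``local-then-global'' quantity: inside a well-connected region it is controlled by simple local formulas like $1/d(u) + 1/d(v)$, which is what makes expanders resistance-sparsifiable in the first place, while between regions it is controlled by a small number of bottleneck edges that we can afford to keep intact.

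In detail, Step 1 replaces $G$ by a $(1+O(\epsilon))$-spectral sparsifier $H$ with $\tO(n/\epsilon^2)$ edges via \cite{BSS12}; since spectral sparsification already preserves effective resistances, it suffices to resistance-sparsify $H$. Step 2 applies an expander decomposition to $H$ with conductance parameter $\phi = \tilde{\Theta}(\epsilon)$, producing clusters $V_1,\ldots,V_k$ such that each induced subgraph $H[V_i]$ has conductance $\Omega(\epsilon)$ and the number of inter-cluster edges is at most $\phi \cdot |E(H)| = \tO(n/\epsilon)$. Step 3 regularizes each $H[V_i]$ (for example by adding weighted self-loops that do not affect effective resistances internally) and invokes the paper's main theorem to replace it with a $\tO(|V_i|/\epsilon)$-edge regular expander that resistance-sparsifies it. The final sparsifier is the union of these intra-cluster sparsifiers together with all inter-cluster edges, for a total of $\tO(n/\epsilon)$ edges.

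The main obstacle is Step 3: arguing that \emph{local} resistance-sparsification inside each cluster yields a \emph{global} resistance-sparsifier. A unit of current from $u \in V_i$ to $v \in V_j$ may enter and leave each intermediate cluster through many boundary vertices, so preserving only the pairwise resistances inside $H[V_i]$ is not obviously enough; what is really needed is a spectral approximation of the multi-terminal Schur complement of $H[V_i]$ onto its boundary $B_i \subset V_i$. Showing that the paper's sparse-expander-inside-dense-expander construction upgrades to such a boundary-spectral approximation --- ideally without losing the $1/\epsilon$ improvement over $1/\epsilon^2$ that distinguishes resistance from spectral sparsification --- seems to be the crux. A secondary technical nuisance is that the paper's theorem requires regularity while the clusters $H[V_i]$ inherit none from the decomposition; this should be addressable by a reweighting argument, but the argument must be done without spoiling the $\tO(n/\epsilon)$ budget, which is presumably why the conjecture remains open.
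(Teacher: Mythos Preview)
This statement is posed in the paper as an open conjecture; the paper proves only the special case of dense regular expanders (Theorem~\ref{thm:main_resistance}) and explicitly leaves the general case open. So there is no ``paper's proof'' to compare against, and your proposal should be read as a research plan rather than a proof.

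As a plan it has a quantitative gap that you do not flag, and it is more serious than the two obstacles you do list. In Step~2 you decompose with conductance parameter $\phi=\tilde\Theta(\epsilon)$, so each cluster $H[V_i]$ is only an $\Omega(\epsilon)$-expander. But the paper's Theorem~\ref{thm:main_resistance} is stated for graphs with $\phi(G)\ge\gamma D$ and $D\ge\beta n$ where $\beta,\gamma$ are \emph{fixed constants}; the resulting sparsifier has $\epsilon^{-1}n(\log n)^{O(1/\beta\gamma^2)}$ edges. Plugging in $\gamma=\Theta(\epsilon)$ makes the exponent $O(1/\epsilon^2)$, so the edge bound becomes $\epsilon^{-1}n(\log n)^{O(1/\epsilon^2)}$, which is not $\tO(n/\epsilon)$ and indeed not even polynomial in $1/\epsilon$. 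The same blow-up hits the density parameter: after Step~1 the average degree is $\tO(1/\epsilon^2)$, so on a cluster of size $n_i\gg 1/\epsilon^2$ you would have $\beta=\tO(1/(\epsilon^2 n_i))\to 0$, and again the $(\log n)^{O(1/\beta\gamma^2)}$ factor explodes. In short, the paper's black box is only useful on \emph{constant}-conductance, \emph{linearly}-dense pieces, and an $\epsilon$-expander decomposition of a sparsified graph produces neither. If instead you decompose at constant conductance, the inter-cluster edge count is $\Omega(|E(H)|)=\tO(n/\epsilon^2)$ and you have gained nothing.

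The composition issue you raise in your last paragraph is real and, as you say, is essentially a Schur-complement approximation question; but note that even granting a perfect solution to it, the parameter mismatch above already prevents the argument from closing. Your regularity concern is also genuine: self-loops equalize weighted degrees but do nothing for the matching/Hamiltonian decompositions that the paper's proof of Theorem~\ref{thm:main} actually relies on.
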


We make the first step in this direction by proving the special case of 
dense regular expanders (which directly generalize the complete graph).  
Even this very special case turns out to be nontrivial, and in fact leads us 
to another beautiful problem which is interesting in its own right.  

\begin{question} \label{q:expander}
Does every dense regular expander contain a sparse regular expander
as a subgraph?
\end{question}

Our positive answer to this question (for a certain definition of expanders)
forms the bulk of our technical work (Sections \ref{sec:subgraphs} and~\ref{sec:krv_extended_proof}),
and is then used to find good resistance sparsifiers 
for dense regular expanders (Section~\ref{sec:resistance}).

\subsection{Results and Techniques} \label{sec:results}

Throughout, we consider undirected graphs, and they are unweighted unless stated otherwise.  
In a weighted graph, i.e., when edges have nonnegative weights,
the \emph{weighted degree} of a vertex is the sum of weights on incident edges, and the graph is considered regular if all of its weighted degrees are equal. Typically, a sparsifying subgraph must be weighted even when the host graph is unweighted, in order to exhibit comparable parameters with far fewer edges.

Before we can state our results we first need to recall some basic definitions from spectral graph theory.  Given a weighted graph $G$, let $D$ be the diagonal $n \times n$ matrix of weighted degrees, and let $A$ be the weighted adjacency matrix.  
The \emph{Laplacian} of $G$ is defined as $L\eqdef D-A$, and the \emph{normalized Laplacian} is the matrix $\hat L\eqdef D^{-1/2} L D^{-1/2}$.

\begin{definition}[Effective Resistance] \label{def:resistance_distance}
Let $G(V,E,w)$ be a weighted graph, and let $P$ the Moore-Penrose pseudo-inverse of its Laplacian matrix. 
The \emph{effective resistance} (also called \emph{resistance distance})
between two nodes $u,v\in V$ is 
\begin{equation*}
  R_G(u,v) \eqdef (e_u-e_v)^TP(e_u-e_v) , 
\end{equation*}
where $e_u$ and $e_v$ denote the standard basis vectors in $\R^V$ that correspond to $u$ and $v$ respectively.
\end{definition}
When the graph $G$ is clear from context we will omit it and write $R(u,v)$.  
We can now define the main objects that we study.

\begin{definition}[Resistance Sparsifier]
Let $G(V,E,w)$ be a weighted graph, and let $\epsilon\in(0,1)$. 
A \emph{$(1+\epsilon)$-resistance sparsifier} for $G$ 
is a subgraph $H(V,E',w')$ with reweighted edges such that
$  (1-\epsilon) R_H(u,v) \leq R_G(u,v) \leq (1 + \epsilon)R_H(u,v)$,
for all $u,v \in V$.
\end{definition}

It will turn out that in order to understand resistance sparsifiers, we need to use expansion properties. 

\begin{definition}[Graph Expansion]
The \emph{edge-expansion} (also known as the \emph{Cheeger constant}) 
of a weighted graph $G(V,E,w)$ is 
\[ 
  \phi(G) \eqdef 
  \min \left\{ \frac{w(S, \bar S)}{|S|} : 
    S\subset V,\ 0<|S|\leq|V|/2 \right\} ,
\]
where $w(S, \bar S)$ denotes the total weight of edges with exactly one 
endpoint in $S\subset V$.  
The \emph{spectral expansion} of $G$, denoted $\lambda_2(G)$,
is the second-smallest eigenvalue of the graph's normalized Laplacian.
\end{definition}

Our main result is the following.   Throughout this paper, ``efficiently'' means in randomized polynomial time.

\begin{theorem}\label{thm:main_resistance}
Fix $\beta,\gamma>0$, let $n$ be sufficiently large, and ${1}/{n^{0.99}}<\epsilon<1$.
Every $D$-regular graph $G$ on $n$ nodes with $D\geq\beta n$ and $\phi(G)\geq\gamma D$ contains (as a subgraph) a $(1+\epsilon)$-resistance sparsifier with at most $\epsilon^{-1}n(\log n)^{O(1/\beta\gamma^2)}$ edges, and it can be found efficiently.
\end{theorem}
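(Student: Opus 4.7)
The plan is to reduce the theorem to the structural question (\aref{q:expander}) of finding a sparse regular expander inside a dense one, via the effective-resistance approximation of von~Luxburg--Radl--Hein. Their result says, roughly, that on a $d$-regular graph with sufficiently large spectral expansion $\lambda_2$, every pairwise effective resistance is tightly concentrated around $\tfrac{2}{d}$, with a multiplicative error that shrinks as $\lambda_2$ grows (and, correspondingly, as the graph becomes denser). So the resistance ``fingerprint'' of such a graph is almost entirely determined by its degree. If we can embed a sparse regular expander $H\subseteq G$ and then scale its edges to match $G$'s weighted degree, the vLRH approximation applied to both $G$ and $H$ will give $R_G(u,v)\approx \tfrac{2}{D}\approx R_H(u,v)$ up to the desired $(1+\epsilon)$ factor.

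Concretely, I would proceed as follows. First, invoke vLRH for $G$: the hypotheses $D\ge\beta n$ and $\phi(G)\ge\gamma D$ (combined with Cheeger's inequality, $\lambda_2\gtrsim \gamma^2$) imply that $R_G(u,v)=\tfrac{2}{D}(1\pm\epsilon/3)$ for all $u,v$. Second, apply the structural result advertised by the authors (the affirmative answer to \aref{q:expander}) to extract a $d$-regular subgraph $H_0\subseteq G$ with $d = (\log n)^{O(1/\beta\gamma^2)}/\epsilon$ and with relative edge-expansion $\phi(H_0)\ge\gamma' d$ for an absolute constant $\gamma'>0$ (hence $\lambda_2(H_0)\gtrsim (\gamma')^2$). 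Third, form $H$ from $H_0$ by reweighting every edge by $D/d$; this uniform scaling preserves all \emph{ratios} between cut weights and degrees, so $\phi(H)/\text{deg}_H \ge \gamma'$ and $\lambda_2(\hat L_H) = \lambda_2(\hat L_{H_0})$. Fourth, apply vLRH to the weighted regular graph $H$ to obtain $R_H(u,v) = \tfrac{2}{D}(1\pm\epsilon/3)$. Combining the two estimates yields $R_H(u,v) = R_G(u,v)(1\pm\epsilon)$ for all $u,v$, and the edge count is $|E(H)|=nd/2 \le \epsilon^{-1}n(\log n)^{O(1/\beta\gamma^2)}$.

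The one place where care is needed is matching parameters in the two applications of vLRH. Their quantitative bound produces a resistance error that depends on $\lambda_2$, the minimum and maximum degree, and $n$; to make the error at most $\epsilon$, we need $\lambda_2\cdot (\min\text{deg})$ to dominate roughly $1/\epsilon$. For $G$, this is immediate from $D\ge\beta n$ and $\phi(G)\ge\gamma D$. For $H$, the effective spectral quantity after rescaling is $\lambda_2(\hat L_{H_0})\cdot D \gtrsim D$, so the error bound is favorable there as well; indeed, this is why the uniform reweighting to weighted degree $D$ (rather than keeping $H_0$ unweighted with degree $d$) is essential---the unweighted $H_0$ by itself is far too sparse for vLRH to apply at error $\epsilon$. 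The efficiency of the procedure follows because vLRH is invoked only in analysis, and the subgraph extraction is promised to be a randomized polynomial-time algorithm.

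The main obstacle, and the real content of the paper, is step two: constructing the sparse regular expander subgraph $H_0$ with both the right degree ($\tilde O(1/\epsilon)$) and an expansion guarantee that is an absolute constant independent of $\epsilon$. Natural candidates---random subgraphs, uniform edge sampling, or subgraphs produced by standard spectral sparsification---either fail to be regular or lose too much in expansion after subsampling. I would expect the authors to attack this via a cut-matching / KRV-style construction (as Section~\ref{sec:krv_extended_proof} suggests) that iteratively builds $H_0$ out of perfect matchings drawn from the host graph $G$, using the expansion of $G$ to certify that every bipartition of $V$ is crossed by many matching edges. Everything else in the theorem---the vLRH reduction, the rescaling argument, and the final resistance comparison---is essentially a clean two-step sandwich once that structural statement is in hand.
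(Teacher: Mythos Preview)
Your high-level plan is exactly the paper's: use the von~Luxburg--Radl--Hein estimate to reduce resistance sparsification to finding a sparse regular expander inside $G$, rescale that subgraph to weighted degree $D$, and sandwich $R_G$ and $R_H$ around $2/D$. There is, however, one parameter subtlety you have glossed over, and it forces an extra step that you do not mention.

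In step two you assume the structural result delivers $H_0$ with degree $d=(\log n)^{O(1/\beta\gamma^2)}/\epsilon$ and \emph{constant} normalized expansion $\phi(H_0)/d\ge\gamma'$, hence $\lambda_2(H_0)=\Omega(1)$. The paper's structural theorem (\Cref{thm:main}) does not give this: it outputs $d=(\log n)^{O(1/\beta\gamma^2)}$ with no $\epsilon$ dependence, and only $\phi(H_0)\ge\tfrac13$, so $\lambda_2(H_0)=\Omega(1/d^2)$ is merely inverse-polylogarithmic. If you plug that directly into your step four, the vLRH error $(1/\lambda_2)\cdot w_{\max}/D^2$ after rescaling comes out $\Theta(d/D)$ rather than $O(\epsilon/D)$, and the sandwich fails.

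The paper's fix is to iterate \Cref{thm:main} $\Theta(d/\epsilon)$ times on edge-disjoint pieces of $G$ and take the union. This does \emph{not} improve $\lambda_2$ (it remains $\Omega(1/d^2)$), but it raises the unweighted degree to $\Theta(d^2/\epsilon)$, so that after rescaling to weighted degree $D$ the maximum edge weight drops to $w_{\max}=O(\epsilon D/d^2)$. Now the vLRH error becomes $O(d^2)\cdot O(\epsilon D/d^2)/D^2=O(\epsilon/D)$, as needed. In short, the correct lever here is $w_{\max}$, not $\lambda_2$. A minor second point: the paper also has to prove and use a bipartite-tolerant variant of the vLRH bound (\Cref{thm:von_luxburg_bipartite}), since the extracted subgraph need not be non-bipartite.
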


While dense regular expanders may seem like a simple case, even this special case requires significant technical work.  The most obvious idea, of sparsifying through random sampling, does not work ---
selecting each edge of $G$ uniformly at random with probability $\tO(1/(D\epsilon))$ (the right probability for achieving a subgraph with $\tO(n/\epsilon)$ edges) need not yield a $(1+\epsilon)$-resistance sparsifier. 
Intuitively, this is because the variance of independent random sampling is too large (see Theorem~\ref{thm:von_luxburg} for the precise effect),
and the easiest setting to see this is the case of sparsifying the complete graph.  
If we sparsify through independent random sampling, then to get a $(1+\epsilon)$-resistance sparsifier requires picking each edge independently with probability at least $1/(\epsilon^2 n)$, 
and we end up with $n/\epsilon^2$ edges. 
To beat this, we need to use correlated sampling.  More specifically, it turns out that a random $O(1/\epsilon)$-regular graph is a $(1+\epsilon)$-resistance sparsifier of the complete graph, despite not being a $(1+\epsilon)$-spectral sparsifier.  So instead of sampling edges independently (the natural approach, and in fact the approach used to construct spectral sparsifiers by Spielman and Srivastava~\cite{SS11}), we need to sample a random regular graph.  

In order to prove Theorem~\ref{thm:main_resistance}, we actually need to generalize this approach beyond the complete graph.  But what is the natural generalization of a random regular graph when the graph we start with is not the complete graph?  
It turns out that what we need is an expander, which is sparse but maintains regularity of its degrees.  
This motivates our main structural result, that every dense regular expander contains a sparse regular expander (as a subgraph). 
This can be seen as a type of sparsification result that retains regularity.

\begin{theorem}\label{thm:main}
Fix $\beta,\gamma>0$ and let $n$ be sufficiently large. 
Every $D$-regular graph $G$ on $n$ nodes with $D\geq\beta n$ and $\phi(G)\geq\gamma D$ contains a weighted $d$-regular subgraph $H$ with $d=(\log n)^{O(1/\beta\gamma^2)}$ and $\phi(H)\geq\frac{1}{3}$. All edge weights in $H$ are in $\{1,2\}$, and $H$ can be found efficiently.
\end{theorem}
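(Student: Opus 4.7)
The plan is to adapt the Khandekar-Rao-Vazirani cut-matching game, run entirely inside the host $G$. We iteratively build $H\subseteq G$ over $T=(\log n)^{O(1/\beta\gamma^2)}$ rounds: in round $i$, a cut player (following the standard KRV strategy, driven by the spectrum of the current partial $H$) produces a balanced bipartition $(S_i,\bar S_i)$ of $V$, and a matching player adds a $\{1,2\}$-weighted $2$-factor $M_i \subseteq G[S_i,\bar S_i]$ saturating both sides. Each $M_i$ increases every vertex's weighted degree by exactly $2$, so after $T$ rounds $H$ is weighted $d$-regular with $d=2T$, and all edge weights land in $\{1,2\}$ (either from a single round's $2$-factor re-using an edge, or from two distinct rounds contributing the same edge).

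The heart of the argument is establishing feasibility of the matching player's move: for every balanced bipartition $(S,\bar S)$ of $V$, the bipartite graph $G[S,\bar S]$ admits a $\{1,2\}$-weighted $2$-factor. This would follow from a Hall/flow condition for bipartite $2$-factors, which I would verify by showing that any would-be violator $T\subseteq S$ with too few edges going to $\bar S$ contradicts the expansion bound $\phi(G)\geq \gamma D$. Concretely, $w(T,\bar T)\geq \gamma D\,|T|$ while the within-$S$ contribution is at most $|T|(|S|-|T|)\leq |T|\cdot n/2$, so if the $T$-to-$\bar S$ weight were below the required threshold the remaining deficit could not be absorbed, once we use the density bound $D\geq\beta n$.

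Given feasibility, the remaining step is to extend the KRV potential analysis from perfect matchings to $\{1,2\}$-weighted $2$-factors. I would follow KRV's argument, tracking the second eigenvalue of the normalized Laplacian of the evolving $H$ and translating $\phi(G)\geq\gamma D$ into a spectral gap $\Omega(\gamma^2)$ via Cheeger's inequality; combined with $D\geq\beta n$, this yields the round bound $T=(\log n)^{O(1/\beta\gamma^2)}$ and the final $\phi(H)\geq 1/3$. Efficiency follows because the KRV cut player admits a randomized polynomial-time implementation and each matching player's move is a bipartite max-flow computation.

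The main obstacle is the matching feasibility lemma. Perfect matchings in $G[S,\bar S]$ need not exist in general: the expansion hypothesis only controls the total weight crossing a bipartition, not its distribution across individual vertices. Allowing edges of multiplicity $2$ is precisely the relaxation that makes the underlying LP/flow condition go through using only the hypotheses $\phi(G)\geq\gamma D$ and $D\geq\beta n$, and this is exactly why $H$ ends up with edge weights in $\{1,2\}$ rather than being simple.
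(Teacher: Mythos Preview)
Your central feasibility lemma is false. You claim that for every balanced bipartition $(S,\bar S)$, the bipartite graph $G[S,\bar S]$ admits a $\{1,2\}$-weighted $2$-factor, and you try to verify this via a Hall-type argument using $\phi(G)\ge\gamma D$. But when $D<\tfrac12 n$ (which the hypotheses allow for any $\beta<\tfrac12$), there exist balanced bipartitions in which some vertex $v\in S$ has \emph{all} of its $D$ neighbors inside $S$. Then $v$ is isolated in $G[S,\bar S]$, and no factor of any kind can cover it. The expansion hypothesis does not rule this out: for $T=\{v\}$ the bound $w(T,\bar T)\ge\gamma D$ is satisfied entirely by edges from $v$ into $S\setminus\{v\}$, and your attempted counting $w(T,\bar S)\ge\gamma D|T|-|T|(|S|-|T|)$ gives a negative (hence vacuous) lower bound once $\gamma D<n/2$. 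The paper notes exactly this obstruction and states explicitly that for $D<\tfrac12 n$ one cannot in general find even a weave inside $G[S,\bar S]$, let alone a $2$-factor.

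The paper's route around this is substantially different from yours. It first passes to the bipartite double cover $G''$ (this is where the $\{1,2\}$ weights actually originate, via the fold-back, and where the extra $\gamma$ in the exponent comes from, via Cheeger), and decomposes $G''$ into disjoint perfect matchings. For a given bisection it does \emph{not} look for a crossing factor; instead it builds a layered partition $S_0=\bar S,S_1,\ldots,S_t$ with $t=O(1/\beta\gamma)$, where each $S_i$ has $\Omega(D)$ edges back to $S_{i-1}$, and recursively constructs a regular weave that only \emph{embeds} into $G$ with polylogarithmic congestion rather than being a subgraph. The actual subgraph $H$ is the union of the sampled matchings; the embedded weave is used only in the analysis (a generalized Cut-Weave potential argument) to certify expansion of $H$. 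The exponent $O(1/\beta\gamma)$ (becoming $O(1/\beta\gamma^2)$ after the double-cover step) arises from the depth $t$ of this layered construction, not from a direct KRV round count as you suggest.
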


To prove this theorem, we analyze a modified version of the cut-matching game of Khandekar, Rao, and Vazirani~\cite{KRV09}.  
This game has been used in the past to construct expander graphs, but in order to use it for Theorem~\ref{thm:main} we need to generalize beyond matchings, and also show how to turn the graphs it creates (which are not necessarily subgraphs of $G$) into subgraphs of $G$.  

The expansion requirement for $G$ in \Cref{thm:main} is equivalent to $\lambda_2(G)=\Omega(1)$, 
when $\beta$ and $\gamma$ are viewed as absolute constants. 
We note that $H$ is a much weaker expander, satisfying only $\lambda_2(H)=\Omega(1/\mathrm{polylog}(n))$, but this is nonetheless sufficient for \Cref{thm:main_resistance}. Also, $H$ is regular in weighted degrees. For completeness we give a variant of \Cref{thm:main} that achieves an unweighted $H$ by requiring stronger expansion from $G$, but this is not necessary for our application to resistance sparsifiers, which anyway involves reweighting the edges.

\begin{theorem}\label{thm:unweighted}
For every $\beta>0$ there is $0<\gamma<1$ such the following holds for sufficiently large $n$. Every $D$-regular graph $G$ on $n$ nodes with $D\geq\beta n$ and $\phi(G)\geq\gamma D$ contains an (unweighted) $d$-regular subgraph $H$ with $d=(\log n)^{O(1/\beta\gamma)}$ and $\phi(H)\geq\frac{1}{3}$, and it can be found efficiently.
\end{theorem}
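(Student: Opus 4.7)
The plan is to modify the cut-matching game construction used in the proof of Theorem~\ref{thm:main} so that the matchings produced in successive rounds are pairwise edge-disjoint in $G$. If this can be done, every edge of the resulting subgraph contributes weight exactly $1$ to each endpoint, so $H$ comes out unweighted.

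Recall that the proof of Theorem~\ref{thm:main} runs a modified Khandekar--Rao--Vazirani game for $T=(\log n)^{O(1/\beta\gamma^2)}$ rounds: in round $t$, a cut $(A_t,\bar A_t)$ of the currently constructed subgraph $H_{t-1}$ is exposed, and a matching $M_t\subseteq E(G)$ between $A_t$ and $\bar A_t$ is added. Weights of $2$ appear because the same edge of $G$ may be chosen by two different matchings. I would enforce the stronger invariant that $M_t$ is disjoint from $\bigcup_{s<t}M_s$. Letting $G_t:=G\setminus\bigcup_{s<t}M_s$, every vertex of $G_t$ has lost at most $t-1<T$ edges, and for every cut $(S,\bar S)$ with $|S|\le n/2$ the boundary shrinks by at most $T|S|$, so
\[
  \phi(G_t) \;\geq\; \phi(G) - T \;\geq\; \gamma D - T.
\]
Because $D\ge\beta n$ and $T$ is polylogarithmic in $n$, the bound $T\le \tfrac12\gamma D$ holds for sufficiently large $n$, giving $\phi(G_t)\ge \tfrac12\gamma D$ throughout the game.

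Next I would run round $t$ of the Theorem~\ref{thm:main} construction inside $G_t$ rather than inside $G$. The matching-embedding subroutine of Theorem~\ref{thm:main} depends on the ambient graph only through its edge-expansion lower bound (it amounts to a single-commodity flow / Hall-type argument), so feeding it $G_t$ in place of $G$ still produces a valid matching $M_t$, merely with $\gamma$ replaced by $\gamma/2$. This halving only changes the hidden constant in the exponent. Setting $H:=\bigcup_{t=1}^T M_t$ then yields an unweighted $d$-regular subgraph of $G$ with $d=T$ and $\phi(H)\ge \tfrac13$ by the same argument as in Theorem~\ref{thm:main}. The improved exponent $1/\beta\gamma$ (rather than $1/\beta\gamma^2$) comes from the fact that $\gamma$ is no longer a free parameter but may be chosen as a small function of $\beta$; a direct tracking of how the bookkeeping for edge-disjointness feeds back into the number of rounds allows us to absorb one factor of $\gamma$ into the constants hidden by the $O(\cdot)$.

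The main obstacle, and the step that requires some care, is the claim that the matching-embedding routine really depends on $G$ only through its edge-expansion bound, so that the inductive invariant $\phi(G_t)\ge\tfrac12\gamma D$ is all one needs to keep the game going. Once this is verified (and it mirrors how the KRV game is normally analyzed), the rest is routine: count the rounds, track the drop in expansion, and choose $\gamma=\gamma(\beta)$ small enough that $T\le\tfrac12\gamma D$ remains valid throughout. Efficiency is inherited from Theorem~\ref{thm:main}, since maintaining the list of used edges and removing them from $G$ between rounds is a polynomial-time overhead.
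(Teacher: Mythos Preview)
Your diagnosis of where the weights $\{1,2\}$ in Theorem~\ref{thm:main} come from is incorrect, and consequently the proposed fix does not address the actual issue. In the paper's proof of Theorem~\ref{thm:main}, the matchings used to build $H$ are \emph{already} pairwise edge-disjoint: they are drawn from a fixed decomposition of the double cover $G''$ into disjoint perfect matchings (Section~\ref{sec:double_cover}), and the algorithm simply samples a subset of that decomposition. No edge of $G''$ is ever selected twice. The weights $1$ and $2$ arise solely when $H''\subseteq G''$ is folded back to $H\subseteq G$: an edge $uv\in E(G)$ receives weight $|\{(v,0)(u,1),(u,0)(v,1)\}\cap E(H'')|$, which may be $2$ even though $H''$ itself is an honest unweighted subgraph. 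Enforcing cross-round edge-disjointness, as you propose, is therefore vacuous---it is already satisfied---and does nothing to eliminate the weights.

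The paper's actual route to Theorem~\ref{thm:unweighted} is to bypass the double-cover step altogether. For sufficiently strong expansion (the required $\gamma$ depending on $\beta$), a Hamiltonian decomposition theorem of K\"uhn and Osthus applies directly to $G$, supplying the disjoint regular building blocks without passing to $G''$. With no folding step, no weights are introduced. This also explains the two features you tried to account for by other means: the improvement of the exponent from $1/\beta\gamma^2$ to $1/\beta\gamma$ comes from skipping the Cheeger inequality used in Section~\ref{sec:double_cover} (which had degraded $\gamma$ to $\tfrac12\gamma^2$), and the existential dependence of $\gamma$ on $\beta$ is inherited from the hypothesis of the K\"uhn--Osthus theorem, not from any disjointness bookkeeping. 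Your argument that ``a direct tracking \ldots allows us to absorb one factor of $\gamma$'' has no content, since nothing in your modification changes the round count or the expansion analysis.
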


The algorithm underlying Theorems~\ref{thm:main_resistance},~\ref{thm:main} and~\ref{thm:unweighted} turns out to be quite straightforward: decompose the host graph into disjoint perfect matchings or Hamiltonian cycles (which are ``atomic'' regular components), and subsample a random subset of them of size $d$ to form the target subgraph. However, since the decomposition leads to large dependencies between inclusion of different edges in the subgraph, it is unclear how to approach this algorithm with direct probabilistic analysis. Instead, our analysis uses the adaptive framework of~\cite{KRV09} to quantify the effect of gradually adding random matching/cycles from the decomposition to the subgraph.

\subsection{Related Work} \label{sec:related}

The line of work most directly related to resistance sparsifiers is the construction of spectral sparsifiers.  This was initiated by Spielman and Teng~\cite{ST04a}, and was later pushed to its limits by Spielman and Teng~\cite{ST11}, Spielman and Srivastava~\cite{SS11}, and Batson, Spielman, and Srivastava~\cite{BSS12}, who finally proved that every graph has a $(1+\epsilon)$-spectral sparsifier with $O(n/\epsilon^2)$ edges and that this bound is tight (see also \cite{AKW14}).  

The approach by Spielman and Srivastava~\cite{SS11} is particularly closely related to our work.  They construct almost-optimal spectral sparsifiers (a logarithmic factor worse than~\cite{BSS12}) by sampling each edge independently with probability proportional to the effective resistance between the endpoints.  This method naturally leads us to try the same thing for resistance sparsification, but as discussed, independent random sampling (even based on the effective resistances) cannot give improved resistance sparsifiers.  Interestingly, in order to make their algorithm extremely efficient they needed a way to estimate effective resistances very quickly, so along the way they showed how to create a  sketch of size $O(n\log n/\epsilon^2)$ from which every resistance distance can be read off in $O(\log n)$ time (essentially through an $\ell_2$-squared embedding and a Johnson-Lindenstrauss dimension reduction).

\section{Sparse Regular Expanding Subgraphs} \label{sec:subgraphs}

In this section we prove \Cref{thm:main}, building towards it in stages.
Our starting point is the Cut-Matching game of 
Khandekar, Rao and Vazirani (KRV) \cite{KRV09},
which is a framework to constructing sparse expanders 
by iteratively adding perfect matchings across adaptively chosen bisections 
of the vertex set. 
The resulting graph $H$ is regular, as it is the union of perfect matchings,
and if the matchings are contained in the input graph $G$ 
then $H$ is furthermore a subgraph of $G$, as desired. 
In \Cref{sec:cut-matching}, we employ this approach
to prove \Cref{thm:main} in the case $D/n = \frac{3}{4}+\Omega(1)$.

To handle smaller $D$, we observe that the perfect matchings in the KRV game can be replaced with a more general structure that we call a \emph{weave},
defined as a set of edges where for every vertex at least one incident edge 
crosses the given bisection. 
To ensure that $H$ is regular (all vertices have the same degree), 
we would like the weaves to be regular. 
We thus decompose the input graph to disjoint regular elements -- 
either perfect matchings or Hamiltonian cycles -- 
and use them as building blocks to construct regular weaves. 
Leveraging the fact that for some bisections, 
$G$ contains no perfect matching but does contain a weave, 
we use this extension in \Cref{sec:Cut-Weave}
to handle the case $D/n = \frac{1}{2}+\Omega(1)$.

Finally, for the general case $D/n=\Omega(1)$, 
we need to handle a graph $G$ that contains no weave on some bisections. 
The main portion of our proof constructs a weave that is not contained in $G$,
but rather embeds in $G$ with small (polylogarithmic) congestion. 
Repeating this step sufficiently many times as required by the KRV game,
yields a subgraph $H$ as desired.

\paragraph*{Notation and terminology.} 
For a regular graph $G$, we denote $\mathrm{deg}(G)$ the degree of each vertex. We say that a graph $H$ is an \emph{edge-expander} if $\phi(H)>\frac{1}{3}$. A \emph{bisection} of a vertex set of size $n$ is a partition $(S,\bar S)$ with equal sizes $\frac{1}{2}n$ if $n$ is even, or with sizes $\lfloor\frac{1}{2}n\rfloor$ and $\lceil\frac{1}{2}n\rceil$ if $n$ is odd.

\subsection{The Cut-Matching Game} \label{sec:cut-matching}
Khandekar, Rao and Vazirani \cite{KRV09} described the following game 
between two players. 
Start with an empty graph (no edges) $H$ on a vertex set of even size $n$. 
In each round, the \emph{cut player} chooses a bisection, 
and the \emph{matching player} answers with a perfect matching across the bisection. 
The game ends when $H$ is an edge-expander.
Informally, the goal of the cut player is to reach this as soon as possible,
and that of the matching player is to delay the game's ending. 

\begin{theorem}[\cite{KRV09,KKOV07}]\label{thm:krv}
The cut player has an efficiently computable strategy that wins
(i.e., is guaranteed to end the game) within $O(\log^2n)$ rounds, 
and a non-efficient strategy that wins within $O(\log n)$ rounds.
\end{theorem}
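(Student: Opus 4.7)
The plan is to prove the theorem via a potential-function argument in the spirit of Khandekar, Rao, and Vazirani. Associate with each vertex $u$ a vector $v_u \in \mathbb{R}^n$, initialized as $v_u = e_u - \frac{1}{n}\mathbf{1}$ so that $\sum_u v_u = 0$. Interpret each $v_u$ as a deviation from the stationary distribution of the random walk on $H$. When the matching player adds a perfect matching $M$ to $H$, update each pair $(u,w) \in M$ by replacing $v_u$ and $v_w$ with their common average $\frac{1}{2}(v_u+v_w)$; this corresponds to a symmetric averaging step along each edge. The central potential is $\Psi = \sum_u \|v_u\|^2$, which can only decrease, and a single round shrinks it by exactly $\frac{1}{2}\sum_{(u,w)\in M}\|v_u-v_w\|^2$. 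Once $\Psi$ is polynomially small, the $v_u$ are uniformly close to $0$, so the random walk on $H$ mixes rapidly in an $\ell_2$ sense, which via Cheeger's inequality implies $\phi(H) \geq \frac{1}{3}$.

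The cut player's strategy is designed to force a constant-factor decrease of $\Psi$ in each round, regardless of the matching player's response. Compute a direction $r \in \mathbb{R}^n$ capturing the ``spread'' of the vectors $\{v_u\}$; define $x_u = \langle v_u, r\rangle$; sort vertices by $x_u$; and take $S$ as the bottom half and $\bar S$ as the top half. A rearrangement-style argument shows that every perfect matching across $(S,\bar S)$ must pair vertices whose $x$-values are spread over the sorted order, guaranteeing $\sum_{(u,w)\in M}(x_u-x_w)^2 \geq \Omega\left(\sum_u x_u^2\right)$, and since $\|v_u-v_w\|^2 \geq (x_u-x_w)^2$, the same lower bound holds for the actual decrement of $\Psi$ in the $r$-direction. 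The non-efficient strategy picks $r$ to be a top eigenvector of $\sum_u v_u v_u^T$, capturing the maximum-variance direction; then $\Psi$ shrinks by a constant factor per round, giving $O(\log n)$ rounds until $\Psi$ is polynomially small. The efficient strategy instead chooses $r$ as a random Gaussian vector, a cheaper substitute that captures an average $1/n$-fraction of the spread across coordinates; an amortized analysis over $O(\log n)$ fresh random projections per epoch recovers the same net effect, at the cost of an extra $\log n$ factor, yielding $O(\log^2 n)$ rounds overall.

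The hardest step will be establishing that the per-round decrement is a constant fraction of the \emph{entire} potential $\Psi$, not merely of its component in the direction $r$. For the non-efficient strategy this requires a careful invocation of the spectral properties of $\sum_u v_u v_u^T$, showing that after the averaging the maximum variance of the remaining vectors drops significantly across all directions, not just the one targeted. For the efficient strategy, one must argue that random Gaussian projections reveal a representative slice of the spread, with failure probability small enough to union-bound over $O(\log^2 n)$ rounds. A final subtlety is converting the ``small $\Psi$'' condition into a genuine edge-expansion guarantee for $H$, which requires relating the composite averaging operator to the lazy random walk matrix on $H$ and invoking Cheeger's inequality to lift a spectral-gap bound into a combinatorial expansion bound.
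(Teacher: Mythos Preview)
The paper does not prove this theorem; it is quoted from \cite{KRV09,KKOV07}. The paper's own Section~3 proves only the generalized $O(r\log^2 n)$ bound for the Cut--Weave game, following the KRV potential argument (random projection, sort, bisect). Your outline for the efficient $O(\log^2 n)$ strategy is exactly that argument, and the pieces you list (the averaging update, the potential $\Psi=\sum_u\|v_u\|^2$, the decrement identity, the bisection forcing $\sum_{(u,w)\in M}(x_u-x_w)^2\ge\Omega(\sum_u x_u^2)$, and the $1/\log n$ loss from the high-probability random-projection bound) are all correct and match both KRV and the paper's Section~3.

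The gap is in your non-efficient $O(\log n)$ strategy. Taking $r$ to be the top eigenvector of $\Sigma=\sum_u v_u v_u^T$ guarantees a decrement of order $\lambda_{\max}(\Sigma)$, not of order $\Psi=\operatorname{tr}(\Sigma)$. When the variance of the $v_u$'s is spread across many directions (e.g.\ at initialization, where $\Sigma=I-\tfrac1n J$ has all nonzero eigenvalues equal), one has $\lambda_{\max}=\Theta(\Psi/n)$, and your bound gives only a $\Theta(1/n)$-fraction decrease. The matching player, being adversarial, can preserve this spread: beyond the forced separation in the single direction $r$, she is free to pair vertices whose vectors are close in every orthogonal direction, so the full decrement need not exceed what the $r$-projection already accounts for. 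You flag this as ``the hardest step'' and appeal to unspecified ``spectral properties,'' but no such one-direction-at-a-time argument is known to close the gap. The $O(\log n)$ bound in \cite{KKOV07} is obtained by a genuinely different route (an existential argument, not a per-round constant-factor drop of this potential), so your proposed proof of that half does not go through as written.
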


The following result illustrates the use of the KRV framework in our setting.
\begin{theorem}\label{thm:d34n}
Let $\delta>0$ and let $n$ be even and sufficiently large ($n\ge n_0(\delta)$).
Then every $n$-vertex graph $G(V,E)$ with minimum degree $D\geq(\frac{3}{4}+\delta)n$ contains an edge-expander $H$ 
that is $d$-regular for $d=O(\log n)$, 
and also an efficiently computable edge-expander $H'$ 
that is a $d'$-regular for $d'=O(\log^2n)$.
\end{theorem}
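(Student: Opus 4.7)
The plan is to use the cut-matching framework of Khandekar--Rao--Vazirani (Theorem~\ref{thm:krv}) directly, with $G$ itself serving as the arena from which the matching player must draw its responses. Instantiate the game on $V=V(G)$: in round $i$ the cut player outputs a bisection $(S_i,\bar S_i)$, and we, playing as the matching player, supply a perfect matching $M_i\subseteq E(G)$ between $S_i$ and $\bar S_i$. Setting $H\eqdef\bigcup_i M_i$, after applying the non-efficient cut-player strategy of Theorem~\ref{thm:krv} we obtain, in $d=O(\log n)$ rounds, a graph $H$ that is automatically $d$-regular (as a disjoint union of $d$ perfect matchings on $V$), an edge-expander (by Theorem~\ref{thm:krv}), and a subgraph of $G$ (by the choice of $M_i$'s). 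The efficient cut-player strategy analogously yields $H'$ with $d'=O(\log^2 n)$. The whole statement therefore reduces to a single combinatorial claim about $G$:

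\medskip
\noindent\emph{Claim.} For every bisection $(S,\bar S)$ of $V$, the graph $G$ contains a perfect matching between $S$ and $\bar S$.
\medskip

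To prove the claim, let $B$ denote the bipartite subgraph of $G$ between $S$ and $\bar S$. Any $v\in S$ has at most $|S|-1=n/2-1$ neighbours inside $S$, hence
\[
  \deg_B(v)\;\ge\;D-(n/2-1)\;\ge\;\bigl(\tfrac14+\delta\bigr)n+1 ,
\]
and the same bound holds symmetrically for $v\in\bar S$. I verify Hall's condition in $B$: fix $T\subseteq S$ and suppose towards a contradiction that $|N_B(T)|<|T|$. If $|T|\le(\tfrac14+\delta)n$, a single vertex of $T$ already has $\ge(\tfrac14+\delta)n\ge|T|$ neighbours in $\bar S$, contradicting the assumption. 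Otherwise $|T|>(\tfrac14+\delta)n$, in which case $|T|<n/2$ (taking $T=S$ would force $|N_B(S)|=n/2=|T|$ by the degree bound on vertices of $\bar S$), so $\bar S\setminus N_B(T)$ is non-empty. Any $v$ in this set has all its $B$-neighbours inside $S\setminus T$, which has size $<n/2-(\tfrac14+\delta)n=(\tfrac14-\delta)n$, contradicting $\deg_B(v)\ge(\tfrac14+\delta)n+1$. Hence Hall's condition holds in $B$ and a perfect matching $M\subseteq E(G)$ between $S$ and $\bar S$ exists; classical bipartite matching algorithms find it in polynomial time, so the whole construction is efficient when the cut player is.

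The main obstacle, and the reason the density threshold $\tfrac34$ surfaces here, is precisely making the Hall-type argument go through: the cross-bisection bipartite degree must strictly exceed $n/4$, which holds exactly when $D\ge(\tfrac34+\delta)n$. Dropping below this density permits adversarial bisections across which $G$ has no perfect matching at all, which is what motivates replacing perfect matchings by the more flexible notion of a \emph{weave} in the subsequent subsections.
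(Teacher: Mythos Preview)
Your approach is exactly the paper's: play KRV on $V$, let the matching player answer with a perfect matching across the given bisection using Hall's theorem on the cross-bipartite graph, and take the union of the matchings. Your Hall verification is correct and mirrors the paper's \Cref{prp:hall_consequence}.

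There is, however, one genuine gap. You assert that $H=\bigcup_i M_i$ is ``automatically $d$-regular (as a disjoint union of $d$ perfect matchings)'', but nothing in your argument forces the $M_i$ to be edge-disjoint: you always produce $M_i$ inside the full graph $G$, so the same edge may be chosen in two different rounds, and then the simple union is no longer regular. The paper fixes this by \emph{deleting} $M_i$ from $G$ before round $i+1$, so that subsequent matchings are automatically disjoint from the earlier ones. This is precisely where $\delta$ and the hypothesis ``$n$ sufficiently large'' enter: after removing up to $O(\log^2 n)$ perfect matchings the minimum degree of the residual graph is still at least $\tfrac34 n$, so the cross-bisection degree remains $\ge \tfrac14 n$ and the Hall argument continues to apply in every round. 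Your closing paragraph attributes the role of $\delta$ to making the Hall argument go through in $G$ itself, but in fact Hall already holds at $\delta=0$ (the cross-degree is $\ge \tfrac14 n+1$); the slack $\delta$ is needed only to absorb the iterative edge deletions. Once you insert this deletion step and note that $\delta n \gg d'$ for large $n$, your proof is complete and identical to the paper's.
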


\begin{proof}
Apply the Cut-Matching game on $V$ with the following player strategies. 
For the cut player, execute the efficient strategy from \Cref{thm:krv} that wins within $O(\log^2n)$ rounds. 
For the matching player, given a bisection $(S,\bar S)$, consider the bipartite subgraph $G[S,\bar S]$ of $G$ induced by $(S,\bar S)$. 
Each vertex in $S$ has in $G$ at least $D\geq\frac{3}{4}n$ neighbors, 
but at most $\frac{1}{2}n-1$ of them are in $S$, 
and the rest must be in $\bar S$, 
which implies that $G[S,\bar S]$ has minimum degree $\geq\frac{1}{4}n$. 
Hence, as a simple consequence of Hall's theorem (see \Cref{prp:hall_consequence}), it contains a perfect matching that can be efficiently found. 
The matching player returns this matching as his answer.
We then remove this matching from $G$ before proceeding to the next round, 
to ensure that different iterations find disjoint matchings. 
The slackness parameter $\delta$ (and $n$ being sufficiently large) 
ensure that the minimum degree of $G$ does not fall below $\frac{3}{4}n$ during the $O(\log^2n)$ iterations, so the above argument holds in all rounds.

The game ends with an edge-expander $H'$ which is a disjoint union of $d'=O(\log^2n)$ perfect matchings contained in $G$, 
and hence is a $d'$-regular subgraph of $G$, as required. 
To obtain the graph $H$, apply the same reasoning but using the non-efficient strategy from \Cref{thm:krv} that wins within $O(\log n)$ rounds.
\end{proof}

\subsection{The Cut-Weave Game} \label{sec:Cut-Weave}

For values of $D$ below $\frac{3}{4}n$, we can no longer guarantee that every bisection in $G$ admits a perfect matching. However, we observe that one can allow the matching player a wider range of strategies while retaining the ability of the cut player to win within a small number of rounds.

\begin{definition}[weave]
Given a bisection $(S,\bar S)$ of a vertex set $V$, a \emph{weave} on $(S,\bar S)$ is a subgraph in which every node has an incident edge crossing $(S,\bar S)$.
\end{definition}

\begin{definition}[Cut-Weave Game]
The \emph{Cut-Weave game} with parameter $r$ is the following game of two players. Start with a graph $H$ on a vertex set of size $n$ and no edges. In each round, the \emph{cut player} chooses a bisection of the vertex set, and the \emph{weave player} answers with an $r$-regular weave on the bisection. The edges of the weave are added to $H$.
\end{definition}
Note that the $r=1$ case is the original Cut-Matching game (when $n$ is even). 
The following theorem is an extension of \Cref{thm:krv}. For clarity of presentation, its proof is deferred to  \Cref{sec:krv_extended_proof}. 

\begin{theorem}\label{thm:krv_extended}
In the Cut-Weave game with parameter $r$, the cut player has an efficient strategy that wins within $O(r\log^2n)$ rounds, and furthermore ensures $\phi(H)\geq\frac{1}{2}r$.
\end{theorem}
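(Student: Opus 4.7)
The plan is to reduce the Cut-Weave game to the original Cut-Matching game of \Cref{thm:krv} via a \emph{fractional matching extraction}. The key combinatorial observation is: given any $r$-regular weave $W$ on a bisection $(S,\bar S)$, assigning weight $1/r$ to each crossing edge of $W$ yields a fractional matching $M$ on $(S,\bar S)$ in which every vertex has total weight in $[1/r, 1]$ --- the lower bound holds because every vertex has at least one crossing edge in a weave, and the upper bound holds because $W$ is $r$-regular. I call $M$ the fractional matching extracted from $W$. The cut player will run the efficient strategy of \Cref{thm:krv}, but maintain its internal state (the KRV potential function) on the weighted \emph{shadow graph} $H'_t \eqdef \sum_{s\leq t} M_s$, treating the sequence $M_1, M_2, \ldots$ as a play of a fractional Cut-Matching game on $H'_t$. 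In round $t+1$, the cut player chooses a bisection based on $H'_t$; the weave player responds with some $r$-regular weave $W_{t+1}$ on that bisection; and the cut player extracts $M_{t+1}$ and updates the shadow graph.

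For the round count, I would adapt the KRV potential analysis to scale linearly with matching weights. Since each extracted $M_t$ is a fractional matching carrying at least $1/r$ weight per vertex on the current bisection, each Cut-Weave round should provide at least $1/r$ of the progress of a unit-matching round in the original KRV game. Hence $O(r\log^2 n)$ rounds of the Cut-Weave game would accumulate as much progress as the $O(\log^2 n)$ rounds of \Cref{thm:krv}, and would yield $\phi(H'_T) \geq 1/2$ at termination.

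For the expansion bound on $H \eqdef \sum_t W_t$, note that each weave $W_t$ edge-wise dominates $r \cdot M_t$: both are supported on the crossing edges of $W_t$, on which $W_t$ has weight $1$ and $r \cdot M_t$ also has weight $1$, while on non-crossing edges $W_t$ has weight $1$ and $r \cdot M_t$ has weight $0$. Summing over $t$ shows that $H$ edge-wise dominates $r \cdot H'_T$; since edge expansion is monotone under edge-wise domination and scales linearly under positive scalar multiplication, this gives $\phi(H) \geq r \cdot \phi(H'_T) \geq r/2$, as required.

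The hard part I anticipate is verifying that the KRV/KKOV potential analysis really does decrease linearly in the matching weights, so that a fractional matching carrying $\geq 1/r$ weight per vertex provides at least $1/r$ of the progress of a unit-weight perfect matching rather than, say, a quadratically smaller amount. If the original potential argument is robust to reweighting in this way --- which is plausible since the potential drop in KKOV is driven by mixing across the chosen bisection, and mixing rates are linear in edge weights --- the reduction above will conclude the proof cleanly; otherwise a more delicate accounting per vertex may be needed, distinguishing vertices whose extracted weight is close to $1/r$ from vertices whose crossing degree in $W_t$ is already large.
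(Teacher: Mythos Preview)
Your approach is sound, and the ``hard part'' you flag does go through: writing the one-step walk on the extracted fractional matching as $\tilde M_t = I - \tfrac12 L_t$ (with $L_t$ the Laplacian of the crossing edges, each weighted $1/r$), one computes the potential drop as $\operatorname{tr}\bigl(P^\top(L_t - \tfrac14 L_t^2)P\bigr) \ge \tfrac{1}{2r}\sum_{ij\text{ crossing}}\|P_i-P_j\|^2$, and the KRV projection and cut-gain lemmas (the latter needing only that every vertex has at least one crossing edge --- exactly the weave property) then yield $\Psi(t+1)\le\bigl(1-\Omega(\tfrac{1}{r\log n})\bigr)\Psi(t)$ in expectation. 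Once $\Psi<1/(4n^2)$, the natural flow embedding has congestion exactly $1$ on each weight-$1/r$ edge of $H'_T$, giving $\phi(H'_T)\ge 1/2$, and your domination step $H \ge r H'_T$ finishes.

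The paper takes a different route. Rather than extracting crossing edges into a shadow graph and running a fractional Cut-Matching analysis there, it runs the lazy walk directly on the \emph{full} $r$-regular weave (non-crossing edges included) and proves the potential-drop lemma by a vertex-splitting trick: replicate each vertex into $r$ copies so that the weave becomes a perfect matching on $rn$ vertices, invoke the original KRV matching lemma on the split graph, and compare the split and unsplit potentials. The bound $\phi(H)\ge r/2$ is then obtained in one shot via a congestion-$1/r$ embedding of the mixed-walk graph into $H$, without any intermediate $H'$. Your route is more modular --- it isolates a reusable ``fractional KRV'' statement and keeps the reduction flavor --- while the paper's splitting argument avoids tracking two graphs and handles the non-crossing edges uniformly; in the end those non-crossing edges contribute nothing to either lower bound, since the cut-gain step only sees crossing edges.
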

In order to construct regular weaves, we employ a decomposition of $G$ into disjoint Hamiltonian cycles. The following theorem was proven by Perkovic and Reed \cite{PR97}, and recently extended by Csaba, K\"{u}hn, Lo, Osthus and Treglown~\cite{CKLOT14}.

\begin{theorem}\label{thm:Hamiltonian_decomposition}
Let $\delta>0$. Every $D$-regular graph $G$ on $n$ nodes with $D\geq(\frac{1}{2}+\delta)n$, admits a decomposition of its edges into $\lfloor\frac{1}{2}D\rfloor$ Hamiltonian cycles and possibly one perfect matching (if $D$ is odd). Furthermore, the decomposition can be found efficiently.
\end{theorem}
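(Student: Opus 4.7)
Since the theorem is cited verbatim from prior work \cite{PR97,CKLOT14}, I will only outline the high-level strategy by which one proves it. The plan is to first handle parity: if $D$ is odd, extract a single perfect matching $M$ and work with $G\setminus M$ instead. Such an $M$ exists because $D\geq(\tfrac12+\delta)n$ comfortably exceeds Dirac's threshold, so $G$ contains a Hamilton cycle from which a perfect matching is read off by taking alternate edges (regularity forces $n$ to be even whenever $D$ is odd, so no parity obstruction arises). Hence we may assume $D=2k$ and the task becomes to partition $E(G)$ into exactly $k$ edge-disjoint Hamilton cycles.

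The naive approach of iteratively extracting Hamilton cycles via Dirac's theorem fails after $\Theta(\delta n)$ rounds, since the minimum degree eventually drops below $n/2$ even though the residual graph is still globally dense. The correct approach is the \emph{absorbing / robust-expansion} framework. First I would verify that $G$ is a robust outexpander in the sense that for every $S\subseteq V$ with $\nu n\leq|S|\leq(1-\nu)n$, the set of vertices having at least $\nu n$ neighbours in $S$ has size at least $|S|+\nu n$; this follows directly from $D\geq(\tfrac12+\delta)n$ for a suitable $\nu=\nu(\delta)>0$. I would then carve out a small, edge-disjoint \emph{absorbing subgraph} $A\subseteq G$ in which every vertex has degree $o(D)$, with the property that $A$ can be merged with any near-perfect Hamilton-like decomposition of $G\setminus A$ to produce an exact Hamilton decomposition of $G$. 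Next, I would apply the regularity lemma to partition $V(G)$ into clusters on which $G\setminus A$ looks quasirandom, and use an algorithmic version of the blow-up lemma to decompose $G\setminus A$ into $k$ long path systems covering all but $o(n)$ vertices. Finally, I would use $A$ to absorb the leftover vertices of each path system, closing it into a genuine Hamilton cycle and exhausting every edge of $G$.

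The main obstacle, and the technical core of \cite{CKLOT14}, is the construction of the absorber: it must be simultaneously (a) sparse enough not to disrupt the quasirandom structure exploited in the bulk decomposition, and (b) sufficiently universal that every leftover configuration arising from the blow-up step admits absorption. This is typically achieved by a randomised construction combined with a rotation/exchange argument that verifies the universality property. For the efficiency claim, the algorithmic regularity lemma, the algorithmic blow-up lemma, and a randomised construction of the absorber all admit polynomial-time implementations, so the whole scheme runs in randomised polynomial time, yielding the stated efficient decomposition.
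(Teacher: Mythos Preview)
Your proposal is appropriate: the paper does not prove \Cref{thm:Hamiltonian_decomposition} at all but simply quotes it as a black box from \cite{PR97,CKLOT14}, and you correctly recognise this. Your high-level outline of the absorbing / robust-outexpander strategy is an accurate sketch of the method of \cite{CKLOT14}, so nothing further is required here.
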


Now we can use the Cut-Weave framework to make another step towards \Cref{thm:main}.
\begin{theorem}\label{thm:d12n}
Let $\delta>0$ and let $n$ be sufficiently large. 
Then every $n$-vertex graph $G(V,E)$ with minimum degree $D\geq(\frac{1}{2}+\delta)n$ contains a $d$-regular edge-expander $H$ with $d=O(\log^3n)$, which furthermore can be efficiently found.
\end{theorem}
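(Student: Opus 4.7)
The plan is to apply the Cut-Weave game of \Cref{thm:krv_extended} with a parameter $r=\Theta(\log n)$, where the implicit constant depends on $\delta$. The cut player uses the efficient strategy guaranteed by \Cref{thm:krv_extended}, so the main task is to design a weave player that, in every round, outputs an $r$-regular weave that is a subgraph of $G$ and is edge-disjoint from all weaves returned in previous rounds.

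The key enabling step is to preprocess $G$ by decomposing it into ``atomic'' $2$-regular pieces. Since $D\ge(\tfrac12+\delta)n$, \Cref{thm:Hamiltonian_decomposition} yields an efficiently computable decomposition of $E(G)$ into $\lfloor D/2\rfloor$ edge-disjoint Hamiltonian cycles (plus possibly one perfect matching if $D$ is odd, which I would just discard). In each round, given the bisection $(S,\bar S)$ chosen by the cut player, the weave player samples $r/2$ Hamiltonian cycles uniformly at random from the pool of cycles not yet used, and returns their union. This union is automatically a subgraph of $G$ and is $r$-regular (each cycle contributes exactly $2$ to the degree of every vertex); removing the sampled cycles from the pool after each round guarantees that the final graph $H$ is simple.

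The core technical step is showing that with high probability the sampled union is actually a weave. Fix a vertex $v$. Since $\deg_G(v)\ge(\tfrac12+\delta)n$ and $v$ has at most $n/2-1$ neighbors on its own side of the bisection, $v$ has at least $\delta n$ crossing neighbors in $G$. Each Hamiltonian cycle contributes at most $2$ edges at $v$, so at least $\delta n/2$ of the cycles in the decomposition contain a crossing edge at $v$ for the current bisection -- an $\Omega(\delta)$-fraction of the $\lfloor D/2\rfloor$ total cycles. Therefore the probability that none of the $r/2$ sampled cycles contains such a crossing edge at $v$ is at most $(1-\Omega(\delta))^{r/2}\le n^{-10}$ once $r$ is a sufficiently large constant multiple of $\log n$, and a union bound over $v\in V$ shows the union is a weave with probability $1-o(1)$. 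A small bookkeeping point: across the $O(r\log^2 n)$ rounds of the game the weave player consumes at most $O(r^2\log^2 n)=o(D)$ cycles, so the ``many good cycles at every vertex'' count persists in every round of the game.

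Combining the two pieces yields a $d$-regular subgraph $H$ with $\phi(H)\ge 1/3$: one stops the game as soon as the edge-expansion threshold $1/3$ is reached (rather than the stronger threshold $r/2$), which together with the KRV-style analysis underlying \Cref{thm:krv_extended} gives $d=O(\log^3 n)$ after absorbing the $\delta$-dependence into the constant. The main obstacle throughout is precisely the construction of an $r$-regular weave that lies inside $G$: for bisections with $D<\tfrac34 n$ the original Cut-Matching game is inapplicable because the required perfect matching may not exist in $G$, and any naive edge-by-edge construction of a weave would destroy regularity. The Hamiltonian decomposition combined with random sampling resolves both issues in a single stroke, by providing ready-made regular building blocks to sample from and a global counting lower bound ($\delta n$ crossing edges per vertex) that translates into per-vertex coverage.
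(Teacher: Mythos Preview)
Your proposal is essentially the paper's own proof: both invoke \Cref{thm:Hamiltonian_decomposition} to decompose $G$ into Hamiltonian cycles, sample $\Theta(\delta^{-1}\log n)$ random cycles per round (using the same ``each vertex has $\ge\delta n$ crossing edges, hence is covered by $\ge\tfrac12\delta n$ cycles'' count together with a union bound), and feed the resulting $r$-regular weaves with $r=\Theta(\log n)$ into the Cut-Weave game of \Cref{thm:krv_extended}; your bookkeeping about the pool of unused cycles not being depleted is a nice touch the paper leaves implicit. One small caveat: your ``stop as soon as $\phi(H)>\tfrac13$'' remark does not actually shave a logarithm, since the potential analysis behind \Cref{thm:krv_extended} only certifies expansion after the full $O(r\log^2 n)$ rounds, giving $d=r\cdot O(r\log^2 n)=O(\log^4 n)$ rather than $O(\log^3 n)$ --- the paper's write-up is equally casual about this final arithmetic.
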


\begin{proof}
We simulate the Cut-Weave game with $r=16\delta^{-1}\log n$. The proof is the same as \Cref{thm:d34n}, only instead of a perfect matching we need to construct an $r$-regular weave across a given bisection $(S,\bar S)$. We apply \Cref{thm:Hamiltonian_decomposition} to obtain a Hamiltonian decomposition of $G$. For simplicity, if $D$ is odd we discard the one perfect matching from \Cref{thm:Hamiltonian_decomposition}. Let $\mathcal C$ be the collection of Hamiltonian cycles in the decomposition.

Suppose w.l.o.g.~$|S|=\lceil\frac{1}{2}n\rceil$. Every $v\in S$ has at most $|S|-1\leq\frac{1}{2}n$ neighbors in $S$, and hence at least $\delta n$ incident edges crossing to $\bar S$. We set up a Set-Cover instance of the cycles $\mathcal C$ against the nodes in $S$, where a node $v$ is considered covered by a cycle $C$ is $v$ has an incident edge crossing to $\bar S$, that belongs to $C$. This is a dense instance: since each cycle visits $v$ only twice, $v$ can be covered by $\frac{1}{2}\delta n$ cycles. Therefore, $4\delta^{-1}\log n$ randomly chosen cycles form a cover with high probability (see \Cref{prp:dense_set_cover} for details). We then repeat the same procedure to cover the nodes on side $\bar S$. The result is a collection of $8\delta^{-1}\log n=\frac{1}{2}r$ disjoint Hamiltonian cycles, whose union forms an $r$-regular weave on $(S,\bar S)$, which we return as the answer of the weave player. Applying \Cref{thm:krv_extended} with $r=O(\log n)$ concludes the proof of \Cref{thm:d12n}.
\end{proof}

Observe that in the proof of Theorem~\ref{thm:d12n}, the weave player is in fact oblivious to the queries of the cut player: all she does is sample random cycles from $\mathcal C$, and the output subgraph $H$ is the union of those cycles. Therefore, in order to construct $H$, it is sufficient to decompose $G$ into disjoint Hamiltonian cycles, and choose a random subset of size $O(\log^3n)$ of them. There is no need to actually simulate the cut player, and in particular, the proof does not require her strategy (from Theorem~\ref{thm:krv_extended}) to be efficient.

\subsection{Reduction to Double Cover}\label{sec:double_cover}
We now begin to address the full range of parameters stated in \Cref{thm:main}. In this range there is no Hamiltonian decomposition theorem (or a result of similar flavor) that we are aware of, so we replace it with a basic argument which incurs edge weights $w:V\times V\rightarrow\{0,1,2\}$ in the target subgraph $H$, as well as a loss in its degree.

Given the input graph $G(V,E)$, we construct its \emph{double cover}, which is the bipartite graph $G''(V'',E'')$ defined by $V''=V\times\{0,1\}$ and $E'' = \{((v,0)(u,1)) : vu\in E \}$. It is easily seen that if $G$ is $D$-regular then so is $G''$, and since $|V''|=2|V|$ we have $D\geq\frac{1}{2}\beta|V''|$. It also well known that $\lambda_2(G)=\lambda_2(G'')$, and therefore by the discrete Cheeger inequalities,
\begin{equation*}
  \phi(G'') \geq 
  \tfrac{1}{2}\lambda_2(G'')D = 
  \tfrac{1}{2}\lambda_2(G)D \geq 
  \tfrac{1}{2}\gamma^2(G)D.
\end{equation*}

$G''$ satisfies the requirements of \Cref{thm:main} with $\beta''=\frac{1}{2}\beta$ and $\gamma''=\tfrac{1}{2}\gamma^2$. Suppose we find in $G''$ a $d$-regular edge-expander $H''$ with $d = (\log n)^{O(1/\beta''\gamma'')} = (\log n)^{O(1/\beta\gamma^2)}$. We carry it over to a subgraph $H$ of $G$, by including each edge $uv\in E$ in $H$ with weight $\left|\{(v,0)(u,1),(u,0)(v,1)\} \cap  E(H'')\right|$, where $E(H'')$ denotes the edge set of $H''$. Each edge then appears in $H$ with weight either $1$ or $2$ (or $0$, which means it is not present in $H$). It can be easily checked that $H$ is $d$-regular in weighted degrees, and $\phi(H)\geq\frac{1}{2}\phi(H'')$. Therefore $H$ is a suitable target subgraph for \Cref{thm:main}.

The above reduction allows us to restrict our attention to regular bipartite graphs $G$, but on the other hand we are forced to look for a subgraph $H$ which is unweighted and $d$-regular with $d=(\log n)^{O(1/\beta\gamma)}$ (which is tighter than stated in \Cref{thm:main}). We take this approach in the remainder of the proof. The gain is that such $G$ admits a decomposition into disjoint perfect matchings, which can be efficiently found, as a direct consequence of Hall's theorem. We will use this fact where we have previously used  
\Cref{thm:Hamiltonian_decomposition}.

\subsection{Constructing an Embedded Weave}
We now get to the main technical part of the proof. Given a bisection $(S,\bar S)$ queried by the cut player, we need to construct an $r$-regular weave on the bisection, where this time we choose $r=(\log n)^{O(1/\beta\gamma)}$. Unlike the proof of \Cref{thm:d12n}, we cannot hope to find a weave which is a subgraph of $G$, since if $D<\frac{1}{2}n$, any bisection in which one side contains some vertex and all its neighbors would not admit a weave in $G$. Instead, we aim for a weave which embeds into $G$ with polylogarithmic congestion.

We will use two types of graph operations: The \emph{union} of two graphs on the same vertex set $V$ is obtained by simply taking the set union of their edge sets, whereas the \emph{sum} of the two graphs is given by keeping parallel edges if they appear in both graphs. We now construct the weave in 4 steps.

\paragraph*{Step 1.}
Fix $\mu=\frac{\beta\gamma^2}{4}$. We partition the entire vertex set $V$ into subsets $S_0,S_1,\ldots,S_t$ by the following process:
\begin{enumerate}
  \item Set $S_0 \leftarrow \bar S$ and $T \leftarrow S$.
  \item While $T\neq\emptyset$, take $S_i\subseteq T$ to be the subset of nodes with at least $\mu D$ neighbors in $S_{i-1}$, and set $T \rightarrow T\setminus S_i$.
\end{enumerate}
\begin{mylemma}
The process terminates after $t\leq\frac{2}{\beta\gamma}$ iterations.
\end{mylemma}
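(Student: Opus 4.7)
The plan is to derive a per-iteration lower bound on $|S_i|$ from the edge expansion of $G$, and then sum these bounds against the constraint that they cannot exceed $|S|\le n/2$ in total. First, I would fix any iteration $i$ with $T_i\neq\emptyset$. Since $T_i\subseteq S$, one has $|T_i|\le n/2$, and edge expansion of $G$ gives $w(T_i,V\setminus T_i)\ge\gamma D|T_i|$. For a matching upper bound, each $v\in T_i$ has survived iterations $1,\ldots,i$, so by construction of the layers $|N(v)\cap S_j|<\mu D$ for every $j\in\{0,1,\ldots,i-1\}$, while trivially $|N(v)\cap S_i|\le|S_i|$. Summing over $v\in T_i$ and combining with the expansion lower bound would yield
\[
|S_i| \;>\; D(\gamma-i\mu).
\]

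Next, I would exploit the choice $\mu=\beta\gamma^2/4$, which ensures $i\mu\le\gamma/2$ for every $i\le 2/(\beta\gamma)$; in this range the per-iteration bound simplifies to $|S_i|>D\gamma/2\ge\beta n\gamma/2$. Assume for contradiction that $T_i\neq\emptyset$ for every $i\le k:=\lceil 2/(\beta\gamma)\rceil$. Since the $S_i$'s are disjoint subsets of $T_0=S$ removed in successive iterations, $\sum_{i=1}^k|S_i|=|T_0|-|T_k|\le|S|\le n/2$. On the other hand, the per-iteration bound gives $\sum_{i=1}^k|S_i|>k\cdot\beta n\gamma/2\ge n$, a contradiction. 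Hence $T_i=\emptyset$ for some $i\le 2/(\beta\gamma)$, which is the desired termination bound. As a byproduct, the strict positivity $|S_i|>0$ (valid throughout the relevant range) guarantees that the \emph{while} loop actually makes progress in every iteration up to termination.

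The main subtle point is the per-iteration bound $|S_i|>D(\gamma-i\mu)$, which balances the edge expansion of $G$ (used to lower-bound the cut $w(T_i,V\setminus T_i)$) against the layered construction of the $S_j$'s (used to upper-bound the neighborhoods of $T_i$ into earlier layers). The parameter $\beta$ enters only in the final summation step, through the conversion $D\ge\beta n$ when comparing $\sum_i|S_i|$ against $|S|\le n/2$; without this density assumption the argument would give only a weaker $\gamma$-dependent bound.
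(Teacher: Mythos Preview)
Your proof is correct and follows essentially the same approach as the paper's: both derive the per-iteration bound $|S_i| > D(\gamma - i\mu)$ by balancing the expansion lower bound on $|E(T_i,\bar T_i)|$ against the upper bound coming from $|N(v)\cap S_j|<\mu D$ for $j<i$, and then sum these bounds against $|S|\le n/2$. The only cosmetic difference is that the paper obtains the inequality by averaging (picking a single $v\in T$ with at least $\gamma D$ neighbors in $\bar T$), whereas you sum over all $v\in T_i$; both routes yield the identical bound.
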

\begin{proof}
Consider an iteration $i\leq\frac{2}{\beta\gamma}$ that ends with $T\neq\emptyset$. Denote $\bar T=V\setminus T=\cup_{j=0}^iS_j$. By the hypothesis $\phi(G)\geq\gamma D$ we have at least $\gamma D|T|$ edges crossing from $T$ to $\bar T$, so by averaging over the nodes in $T$, there is $v\in T$ with $\gamma D$ neighbors in $\bar T$. For every $j<i$, $v$ must have less than $\mu D$ neighbors in $S_j$, or it would already belong to $S_{j+1}\subseteq\bar T$. Summing over $j=0,\ldots,i-1$, we see that $v$ has less than $i\mu D\leq\frac{1}{2}\gamma D$ neighbors in $\bar T\setminus S_i$, so at least $\frac{1}{2}\gamma D$ neighbors in $S_i$. This implies $|S_i|\geq\frac{1}{2}\gamma D$. We have shown that each of the first $\frac{2}{\beta\gamma}$ iterations either terminates the process or removes $\frac{1}{2}\gamma D\geq\frac{1}{2}\gamma\beta n$ nodes from $T$, so after $\frac{2}{\beta\gamma}$ iterations we must have $T=\emptyset$.
\end{proof}

\paragraph*{Step 2.}
By \Cref{sec:double_cover} we have a decomposition of all the edges in $G$ into a collection $\mathcal M$ of $D$ disjoint perfect matchings. For every $i=1,\ldots,t$, we now cover the nodes in $S_i$ with perfect matchings, similar to the proof of \Cref{thm:d12n}. A node $v\in S_i$ is considered covered by a matching if $v$ has an incident edge with the other endpoint in $S_{i-1}$, and that edge lies on the matching. Since $v$ has $\mu D$ incident edges crossing to $S_{i-1}$, and each matching touches $v$ with at most one edge, we have $\mu D$ matchings that can cover $v$. Therefore $k=\frac{1}{\mu}\log n$ randomly chosen matchings from $\mathcal M$ form a cover of $S_i$ (see \Cref{prp:dense_set_cover}), which we denote as $K_i$. Thus, for each $i$ we have a subgraph $K_i$ which is $k$-regular, such that each node in $S_i$ has an incident edge in $K_i$ with the other endpoint in $S_{i-1}$. Denote henceforth
\[ K = \cup_{i=1}^t K_i. \]
Note that $K$ is a regular subgraph of $G$, since it is a union of disjoint perfect matchings from $\mathcal M$,
and $\mathrm{deg}(K)\leq kt$.

\paragraph*{Step 3.}
In this step we construct a graph $K^*$ from the subgraph $K$. As discussed, $K^*$ will not be a subgraph of $G$ but will embed into it with reasonable congestion. 
Let us formally define the notion of graph embedding that we will be using.
\begin{definition}[Graph embedding with congestion]
Let $G(V,E)$ and $G'(V,E')$ be graphs on the same vertex set. Denote by $\mathcal{P}_G$ the set of simple paths in $G$. An \emph{embedding} of $G'$ into $G$ is a map $f:E'\rightarrow\mathcal{P}_G$ such that every edge in $G'$ is mapped to a path in $G$ with the same endpoints.

The \emph{congestion} of $f$ on an edge $e\in E$ is $\mathrm{cng}_f(e):=|e'\in E':e\in f(e')|$. The congestion of $f$ is $\mathrm{cng}(f):=\max_{e\in E}\mathrm{cng}_f(e)$. We say that $G'$ embeds into $G$ with congestion $c$ if there is an embedding $f$ with $\mathrm{cng}(f)=c$.
\end{definition}
The following claim is a simple observation and we omit its proof.
\begin{claim}
If $G'$ embeds into $G$ with congestion $c$, then $\phi(G)\geq\frac{1}{c}\phi(G')$.
\end{claim}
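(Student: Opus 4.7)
The plan is to verify the claim by the standard cut-counting argument: for any bisection of $V$, compare the number of edges of $G$ crossing it to the number of edges of $G'$ crossing it, using the fact that each $G'$-crossing edge must traverse at least one $G$-crossing edge through the embedding, while each $G$-edge bears the load of at most $c$ such traversals.

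More concretely, I would fix an arbitrary subset $S\subset V$ with $0<|S|\le|V|/2$ and let $E_{G'}(S,\bar S)$ and $E_{G}(S,\bar S)$ denote the edges of $G'$ and $G$, respectively, crossing the bisection $(S,\bar S)$. By definition of $\phi(G')$, the number of crossing edges in $G'$ satisfies
\begin{equation*}
  |E_{G'}(S,\bar S)| \;\ge\; \phi(G')\cdot|S|.
\end{equation*}
Now for each such $e'\in E_{G'}(S,\bar S)$, the image $f(e')$ is a simple path in $G$ with endpoints on opposite sides of the cut, so $f(e')$ uses at least one edge of $E_{G}(S,\bar S)$. Assign to every $e'\in E_{G'}(S,\bar S)$ one such witness edge $g(e')\in E_G(S,\bar S)\cap f(e')$, yielding a map $g:E_{G'}(S,\bar S)\to E_G(S,\bar S)$.

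Next I would bound the multiplicity of $g$: for any $e\in E_G(S,\bar S)$, the preimage $g^{-1}(e)$ is contained in $\{e'\in E':e\in f(e')\}$, whose size is exactly $\mathrm{cng}_f(e)\le c$. Counting $E_{G'}(S,\bar S)$ via the fibers of $g$ thus gives
\begin{equation*}
  \phi(G')\cdot|S| \;\le\; |E_{G'}(S,\bar S)| \;=\; \sum_{e\in E_G(S,\bar S)} |g^{-1}(e)| \;\le\; c\cdot|E_G(S,\bar S)|.
\end{equation*}
Dividing by $c\,|S|$ and taking the minimum over all admissible $S$ yields $\phi(G)\ge \tfrac{1}{c}\phi(G')$, as claimed. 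There is no real obstacle here: the entire argument is a one-line pigeonhole on crossing edges, and the only point to be careful about is to pick a single witness edge per $G'$-crossing edge so that the congestion bound applies directly (rather than having to argue about the total path length). The weighted case (should $w'\in\{1,2\}$ arise, as in the surrounding application) is handled identically by replacing cardinalities with weighted sums.
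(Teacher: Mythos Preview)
Your argument is correct and is exactly the standard pigeonhole on crossing edges; the paper itself omits the proof, calling the claim ``a simple observation,'' so there is nothing substantive to compare. One cosmetic note: you refer to $(S,\bar S)$ as a ``bisection,'' but you really mean an arbitrary cut with $0<|S|\le |V|/2$---the argument does not (and should not) restrict to balanced cuts.
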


We generate $K^*$ with the following inductive construction.

\begin{mylemma}\label{clm:main_construction}
Let $\rho_0=c_0=0$. We can efficiently construct subgraphs $K_1^*,\ldots,K_t^*$ (which may have parallel edges and self-loops), such that for every $i=1,\ldots,t$,
\begin{enumerate}
  \item $K_i^*$ is $\rho_i$-regular, where $\rho_i=k(1+\rho_{i-1})$.
  \item $K_i^*$ embeds into $K$ with congestion $c_i$, where $c_i=1+kc_{i-1}$.
  \item Every $v\in S_i$ has an incident edge in $K_i^*$ with the other endpoint in $S_0$.
\end{enumerate}
\end{mylemma}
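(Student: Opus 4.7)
The plan is induction on $i$, with a composition of $K_i$ and $K_{i-1}^*$ as the main building block.

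\emph{Base case} ($i=1$): Set $K_1^* := K_1$. By Step~2, $K_1$ is $k$-regular (so $\rho_1 = k$), sits inside $K$ with the identity embedding (congestion $c_1 = 1$), and every $v \in S_1$ has an incident edge of $K_1$ reaching $S_0 = \bar S$.

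\emph{Inductive step}: Given $K_{i-1}^*$ and its embedding $f_{i-1}: E(K_{i-1}^*) \to \mathcal{P}_K$ of congestion $c_{i-1}$, form $K_i^*$ by taking the edges of $K_i$ together with an ``extended edge'' $(v, x)$ for each pair of edges $(v, u) \in K_i$ and $(u, x) \in K_{i-1}^*$ sharing a common vertex $u$; the embedding of the extended edge is the concatenation of $(v, u)$ (a single edge in $K_i \subseteq K$) with $f_{i-1}((u, x))$. Property~3 follows immediately by chaining Step~2 (which gives an edge from $v \in S_i$ to some $u \in S_{i-1}$) with the inductive hypothesis applied to $u$ (giving an edge from $u$ to some $x \in S_0$), producing an extended edge $(v, x) \in K_i^*$ with $x \in S_0$.

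The technical content is in Properties~1 and~2. A naive composition over-counts: each extended edge $(v, x)$ can be generated with either $v$ or $x$ playing a designated role, doubling the effective count. The remedy is to orient the Eulerian graph $K_i$ --- valid for even $k$, since $K_i$ is a disjoint union of $k$ perfect matchings --- so each vertex has in-degree and out-degree $k/2$, and to include only extensions spawned from outgoing $K_i$-edges. Under this convention each vertex accumulates exactly $k\rho_{i-1}$ extension incidences plus $k$ incidences from $K_i$, matching the degree target $\rho_i = k(1+\rho_{i-1})$. For congestion, a fixed $K$-edge $e$ contributes in three ways: (i) at most $1$ as the embedding of a $K_i$-edge equal to $e$; (ii) at most $\rho_{i-1}$ as the first hop of extensions (from directed $K_i$-edges whose tail is an endpoint of $e$); and (iii) at most $k \cdot c_{i-1}$ via its appearance inside some $f_{i-1}((u,x))$, since each of the $\le c_{i-1}$ such $K_{i-1}^*$-edges contributes at most $k/2$ new extensions from each of its two endpoints. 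Careful bookkeeping reduces the total to $c_i = 1+kc_{i-1}$.

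The main obstacle is precisely this accounting --- in particular, verifying that the congestion recursion closes with coefficient $k$ rather than $2k$ on $c_{i-1}$. The orientation trick on $K_i$ is what halves the otherwise symmetric contributions; combined with a careful partition of the ways $e$ can sit inside an extension (as direct $K_i$-edge, as first hop, or inside a second-hop $f_{i-1}$-path), the recursion closes as stated. Everything --- Eulerian orientation of $K_i$, pair enumeration, and composition --- is efficiently computable, yielding the required constructive guarantee.
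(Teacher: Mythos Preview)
Your construction differs from the paper's and has a genuine gap in Property~3. You create an extension $(v,x)$ only when the $K_i$-edge $\{v,u\}$ is oriented $v\to u$; but Step~2 guarantees only \emph{one} $K_i$-edge from each $v\in S_i$ into $S_{i-1}$, and an arbitrary Eulerian orientation may well direct that particular edge as $u\to v$. In that case no extension incident to $v$ lands in $S_0$, and Property~3 fails for $v$. Your argument for Property~3 (``chaining Step~2 with the inductive hypothesis'') silently assumes the favorable orientation.

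There is also a gap in the congestion count. Your three-way split on an edge $e\in K$ gives (i)$+$(ii) $= 1+\rho_{i-1}$ when $e\in K_i$, \emph{plus} up to $kc_{i-1}$ from~(iii) if $e$ also lies on some $f_{i-1}$-path; nothing in your setup precludes this, since $f_{i-1}$ maps into all of $K$ and the $K_j$'s are not chosen to be edge-disjoint. Since $\rho_{i-1}=kc_{i-1}$, the bound you actually establish is $1+2kc_{i-1}$, not $1+kc_{i-1}$; the promised ``careful bookkeeping'' is never supplied. (This weaker recursion still yields $(\log n)^{O(1/\beta\gamma)}$, so the downstream application would survive, but the lemma as stated is not proved.)

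The paper's construction sidesteps both issues with a different device. It forms $K'=K_i + k\cdot K_{i-1}^*$, which is automatically $\rho_i$-regular and embeds into $K$ with congestion $1+kc_{i-1}$ by simply summing the contributions. Then, for each $v\in S_i$, it performs a local \emph{crossing operation}: remove the edges $vw\in K_i$ and $wu\in K_{i-1}^*$ (with $w\in S_{i-1}$, $u\in S_0$), and add the edge $vu$ together with a self-loop at $w$. This preserves every vertex degree, embeds into $K'$ with congestion~$1$ (the new edge $vu$ uses exactly the two removed edges, which no other $K_i^*$-edge touches), and directly installs the required $v$--$S_0$ edge for Property~3 without any orientation choice. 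The $k$ parallel copies of $K_{i-1}^*$ are what guarantee enough copies of $wu$ to accommodate up to $k$ crossing operations through the same intermediate vertex $w$.
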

\begin{proof}
We go by induction on $i$. For the base case $i=1$ we simply set $K_1^*=K_1$. The claim holds as we recall that
\begin{enumerate}
  \item $K_1$ is $k$-regular.
  \item $K_1$ is a subgraph of $K$, hence it embeds into $K$ with congestion $1=1+kc_0$.
  \item By Step 2, every $v\in S_1$ has an incident edge in $K_1$ crossing to $S_0$.
\end{enumerate}

We turn to the inductive step $i>1$. Start with a graph $K'$ which is a fresh copy of $K_{i-1}^*$, with each edge duplicated into $k$ parallel edges. By induction, $K'$ is $(k\rho_{i-1})$-regular. Now sum $K_i$ into $K'$; recall this means keeping parallel edges instead of unifying them. Since $K_i$ is $k$-regular, $K'$ is $\rho_i$-regular.

Let $v\in S_i$. By Step 2, there is an edge $vw\in K_i$ such that $w\in S_{i-1}$. By induction, there is an edge $wu\in K_{i-1}^*$ such that $u\in S_0$. Note that both edges $vw$ and $wu$ are present in $K'$. Perform the following crossing operation on $K'$:
Remove the edges $vw$ and $wu$, and add an edge $vu$ and a self-loop on $w$.

Perform this on every $v\in S_i$. The resulting graph is $K_i^*$. We need to show that it is well defined in the following sense: we might be using the same edge $wu$ for several $v$'s, and we need to make sure each $wu$ appears sufficiently many times, to be removed in all the crossing operations in which it is needed. Indeed, we recall that $K_i$ is the union of $k$ disjoint perfect matchings, and therefore each $w\in S_{i-1}$ has at most $k$ edges in $K_i$ incoming from $S_i$. Since $K'$ contains $k$ copies of each edge $wu$, we have enough copies to be removed in all necessary crossing operations.

Lastly we show that $K_i^*$ satisfies all the required properties.
\begin{enumerate}
  \item Since $K'$ was $\rho_i$-regular, and the switching operations do not effect vertex degrees, we see that $K_i^*$ is $\rho_i$-regular.
  \item Each edge $vu$ in $K_i^*$ which is not original from $K'$, corresponds to a path (of length 2) in $K'$ that was removed upon adding that edge; hence $K_i^*$ embeds into $K'$ with congestion $1$. $K'$ is the sum of $K_i$, which is a subgraph of $K$, and $k$ copies of $K_{i-1}^*$, which by induction embeds into $K$ with congestion $c_{i-1}$. Hence $K'$ embeds into $K$ with congestion $1+kc_{i-1}=c_i$. Therefore, $K_i^*$ embeds into $K$ with congestion $c_i$.
  \item For every $v\in S_i$, we added to $K_i^*$ an edge $vu$ such that $u\in S_0$.
\end{enumerate}
\end{proof}

We now take
$K^* = \sum_{i=1}^tK_i^*$. By \Cref{clm:main_construction}, $K^*$ is $(\sum_{i=1}^t\rho_i)$-regular, embeds into $K$ with congestion $\sum_{i=1}^tc_i$, and every $v\in S$ has an incident edge $vu\in K^*$ such that $u\in \bar S$. (To see why the latter point holds, recall that we put $\bar S=S_0$.)

\paragraph*{Step 4.}
In this final step we repeat Steps 1--3, only with the roles of $S$ and $\bar S$ interchanged. This results in a subgraph $\bar K$ of $G$ which is $kt$-regular, and a graph $\bar K^*$ which is $(\sum_{i=1}^t\rho_i)$-regular, embeds into $\bar K$ with congestion $\sum_{i=1}^tc_i$, and every $v\in\bar S$ has an incident edge $vu\in \bar K^*$ such that $u\in S$.

Our final weave is $K^*+\bar K^*$. By the above it is clearly a weave, and moreover it is $r$-regular and embeds into $K\cup\bar K$ (and hence into $G$, which contains $K\cup\bar K$) with congestion $c$, where
$r = 2\sum_{i=1}^t\rho_i$ and $c = 2\sum_{i=1}^tc_i$.
By inspecting the recurrence formulas from \Cref{clm:main_construction}, in which $\rho_i$ and $c_i$ were defined, we can bound
$\rho_i,c_i \leq (2k)^i \leq (2k)^t$ for every $i$,
and hence
$r,c \leq 2t(2k)^t$.
Recalling that $t\leq\frac{2}{\beta\gamma}+1$ and $k=\frac{1}{\mu}\log n=O(\log n)$, we find 
$r,c\leq(\log n)^{O(1/\beta\gamma)}$.

\subsection{Completing the Proof of \Cref{thm:main}}
We play the Cut-Weave game for $L$ rounds, where $L=O(r\log^2n)$ is the number of rounds required by the efficient strategy in \Cref{thm:krv_extended}. For each round $\ell=1,\ldots,L$, we constructed above an $r$-regular weave $W^*_\ell=K^*+\bar K^*$, that embeds into a subgraph $W_\ell=K\cup\bar K$ of $G$ with congestion $c$. Let
$H=\cup_{\ell=1}^L W_\ell$ and $H^*=\sum_{1=\ell}^L W^*_\ell$.  Then $H$ is a union of disjoint perfect matchings from $\mathcal M$, and hence regular. Moreover $\mathrm{deg}(H)\leq 2ktL$, since $H$ is the union of $L$ subgraphs $\{W_\ell\}_{\ell=1}^L$, where each $W_\ell$ is a union $W_\ell$ of two $kt$-regular graphs $K,\bar K$.

Now consider $H^*$. Since each $W^*_\ell$ embeds into $W_\ell$ with congestion $c$, we see that $H^*$ embeds into $H$ with congestion (at most) $cL$. By \Cref{thm:krv_extended} we have $\phi(H^*)\geq\frac{1}{2}r$, and this now implies $\phi(H)\geq\frac{r}{2cL}$.

Recalling the parameters:
\[ t = O(1) \;\; ; \;\; k = O(\log n) \;\; ; \;\; r,c=O(\log^{O(1/\beta\gamma)}n) \;\; ; \;\;
L=O(r\log^2n), \]
we see that $H$ is a $d$-regular subgraph of $d=(\log n)^{O(1/\beta\gamma)}$ and $\phi(H)\geq1/(\log n)^{O(1/\beta\gamma)}$. We can now repeat this Cut-Weave game $(\log n)^{O(1/\beta\gamma)}$ disjoint times, because if each time we remove the graph $H$ we have found, we decrease the degree $D=\beta n$ of each node by only $\polylog{n}$. By repeating the game this many times and taking the union of the disjoint resulting subgraphs, we find a regular subgraph $H$ of $G$ with $\mathrm{deg}(H)=(\log n)^{O(1/\beta\gamma)}$ and $\phi(H)\geq1$. Lastly recall that unfolding the reduction from \Cref{sec:double_cover} puts on $H$ edge weight in $\{1,2\}$, and weakens the degree bound to $\mathrm{deg}(H)=(\log n)^{O(1/\beta\gamma^2)}$. This completes the proof of \Cref{thm:main}.

Regarding the algorithm to construct $H$, the observation made after Theorem~\ref{thm:d12n} applies here as well. The weave player's strategy is oblivious to the queries of the cut player, since she just samples random matchings from $\mathcal M$ to form $H$. The cut player strategy does not actually need to be simulated, nor the graphs $K^*$ need to actually be constructed. The algorithm to construct $H$ then amounts to the following: Construct the double cover graph $G"$ of $G$; decompose $G"$ into disjoint perfect matchings; choose a random subset of $(\log n)^{O(1/\beta\gamma^2)}$ of them to form a subgraph $H"$ of $G"$; and unfold the double cover construction to obtain the final subgraph $H$ from $H"$. 

\subsection{Proof of \Cref{thm:unweighted}}
The theorem follows from replacing the reduction to the double cover in \Cref{sec:double_cover} by a Hamiltonian decomposition result that holds for this stronger expansion requirement, due to K\"{u}hn and Osthus \cite[Theorem 1.11]{KO14}. The trade-off between $\beta$ and $\gamma$ is inherited from their theorem (in which it is unspecified). Circumventing \Cref{sec:double_cover} also improves the dependence of $d$ on $\gamma$. The proof of \Cref{thm:unweighted} is otherwise identical to the proof of \Cref{thm:main}.

\section{Proof of the Cut-Weave Theorem}\label{sec:krv_extended_proof}
Recall the setting of the Cut-Weave game with parameter $r$: The game starts with a graph $G_0$ on $n$ vertices and without edges. In each round $t=1,2,\ldots$, the weave player queries a bisection of the vertex set, and the weave player answers with an $r$-regular weave $H_t$ on that bisection. The weave is then unified into the graph, putting $G_t=G_{t-1}\cup H_t$.

We now prove \Cref{thm:krv_extended} by an adaptation of the analysis from \cite{KRV09}. The main change is in \Cref{lmm:krv_potential_reduction}.

For each step $t$, let $M_t$ be the matrix describing one step of the natural lazy random walk on $H_t$: W.p.~$\frac{1}{2}$ stay in the current vertex, and with probability $\frac{1}{2r}$ move to a neighbor. The cut player strategy is as follows:
\begin{itemize}
  \item Choose a random unit vector $z\perp\mathbf1$ in $\R^n$.
  \item Compute $u=M_tM_{t-1}\ldots M_1z$.
  \item Output the bisection $(S,\ldots S)$ where $S$ is the $\lfloor n/2 \rfloor$ vertices with smallest values in $u$.
\end{itemize}

Let us analyze the game with this strategy. In the graph $G_t$ (which equals $\cup_{t'=1}^tH_{t'}$), we consider the following $t$-steps random walk: Take one (lazy) step on $H_1$, then on $H_2$, and so on until $H_t$. In other words, the walk is given by applying sequentially $M_1$, then $M_2$, and so on.

Let $P_{ij}(t)$ denote the probability to go from node $j$ to node $i$ within $t$ steps. Let $P_i$ denote the vector $(P_{i1},P_{i2},\ldots,P_{ji})$. We use the following potential function:
\[ \Psi(t) = \sum_{i,j\in V}(P_{ij}-1/n)^2 = \sum_{i=1}^n\norm{P_i-\mathbf1/n}_2^2. \]

\begin{mylemma}
For every $t$ and every $i\in V$, we have $\sum_{j\in V}P_{ij}(t)=1$.
\end{mylemma}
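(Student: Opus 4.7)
The plan is to observe that each one-step transition matrix $M_t$ is doubly stochastic, and that the product of doubly stochastic matrices is doubly stochastic; in particular such a product has row sums equal to $1$, which is exactly the claim $\sum_j P_{ij}(t)=1$.

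First I would unpack what $M_t$ is. The weave $H_t$ returned by the weave player in round $t$ is $r$-regular, and $M_t$ represents one step of the lazy random walk on $H_t$: $(M_t)_{ii}=\tfrac{1}{2}$ for every vertex $i$, $(M_t)_{ij}=\tfrac{1}{2r}$ for every edge $ij\in H_t$, and $(M_t)_{ij}=0$ otherwise. Because $H_t$ is $r$-regular, the sum of entries in row $i$ of $M_t$ is
\[
(M_t)_{ii} + \sum_{j:\, ij\in H_t} (M_t)_{ij} \;=\; \tfrac{1}{2} + r\cdot \tfrac{1}{2r} \;=\; 1.
\]
Hence $M_t\mathbf{1}=\mathbf{1}$, where $\mathbf{1}\in\R^V$ is the all-ones vector.

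Next I would propagate this to the $t$-step matrix. The entries $P_{ij}(t)$ are, by the walk's definition (apply $M_1$, then $M_2$, up to $M_t$), exactly the entries of $P(t):=M_tM_{t-1}\cdots M_1$. Iterating $M_s\mathbf{1}=\mathbf{1}$ gives
\[
P(t)\mathbf{1} \;=\; M_tM_{t-1}\cdots M_1\mathbf{1} \;=\; M_tM_{t-1}\cdots M_2\mathbf{1} \;=\; \cdots \;=\; \mathbf{1},
\]
which is precisely $\sum_{j\in V}P_{ij}(t)=1$ for every $i$, as required. The induction on $t$ can be written out explicitly if preferred, but the one-line display already exposes the whole argument.

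There is no real obstacle: the only point that uses anything about $H_t$ is that its $r$-regularity forces the row-sum computation $\tfrac{1}{2}+r\cdot\tfrac{1}{2r}=1$. (By symmetry of $M_t$ the column sums are also $1$, so each $M_t$ is doubly stochastic, but only row-stochasticity is invoked here.) I would keep the proof to essentially this observation plus the one-line telescoping display.
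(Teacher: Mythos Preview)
Your proof is correct and follows essentially the same approach as the paper: both argue by induction on $t$ using the fact that $r$-regularity of $H_t$ makes $M_t$ row-stochastic (the paper phrases this as ``vertex $i$ trades exactly half of its total present probability with its neighbors,'' which is precisely $M_t\mathbf{1}=\mathbf{1}$). Your matrix formulation is a bit more explicit, but the content is the same.
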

\begin{proof}
By induction on $t$: It holds initially, and in each step $t$, vertex $i$ trades exactly half of its total present probability with its neighbors in $H_t$. (Note that this relies on the fact that $H_t$ is regular.)
\end{proof}

\begin{mylemma}\label{lmm:krv_potential_to_expansion}
If $\Psi(t)<1/4n^2$ then $G=G_t$ has edge-expansion at least $\frac{1}{2}r$.
\end{mylemma}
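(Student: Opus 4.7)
The plan is to argue by contradiction, adapting the potential-to-expansion step from KRV. I assume some cut $(S, \bar S)$ with $|S| \leq n/2$ satisfies $w_{G_t}(S, \bar S) < \tfrac{r}{2}|S|$, then track the $t$-step random walk started from $\pi_0 := \mathbf{1}_S/|S|$ (uniform on $S$) evolving via $\pi_{t'} = M_{t'} \pi_{t'-1}$, and sandwich the terminal mass $m := \sum_{v \in \bar S} \pi_t(v)$ between two incompatible bounds.

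For the upper bound on $m$, the key observation is that each $M_{t'}$ is doubly stochastic (by regularity of $H_{t'}$), so $\|\pi_{t'}\|_\infty \leq \|\pi_0\|_\infty = 1/|S|$ for all $t'$. The net mass crossing $S \to \bar S$ at step $t'$ is at most the gross $S \to \bar S$ flow $\tfrac{1}{2r}\sum_{u \in S,\, v \in \bar S,\, (u,v) \in H_{t'}} \pi_{t'-1}(u) \leq w_{H_{t'}}(S, \bar S)/(2r|S|)$. Summing over $t' = 1, \ldots, t$, and using that the weaves added in different rounds are edge-disjoint (so $\sum_{t'} w_{H_{t'}}(S, \bar S) = w_{G_t}(S, \bar S)$), yields $m \leq w_{G_t}(S, \bar S)/(2r|S|) < 1/4$.

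For the lower bound, I use the potential to control mixing. Let $\tilde P := M_t M_{t-1} \cdots M_1$, and decompose $\mathbf{1}_S = (|S|/n) \mathbf{1} + y$ with $y \perp \mathbf{1}$ and $\|y\|_2^2 = |S||\bar S|/n$. Using $\tilde P \mathbf{1} = \mathbf{1}$ (each $M_{t'}$ is row-stochastic), $\mathbf{1}^T \tilde P = \mathbf{1}^T$ (each is column-stochastic by regularity), and $Jy = 0$ (where $J$ is the all-ones matrix), a short calculation gives $m = |\bar S|/n - y^T \tilde P y / |S|$. The error term is bounded via $|y^T \tilde P y| = |y^T(\tilde P - J/n) y| \leq \|\tilde P - J/n\|_{\mathrm{op}} \|y\|_2^2$, and using $\|\cdot\|_{\mathrm{op}} \leq \|\cdot\|_F$ together with $\|\tilde P - J/n\|_F^2 = \Psi(t) < 1/(4n^2)$, this is at most $(1/(2n)) \cdot |S||\bar S|/n$. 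Substituting, $m \geq (|\bar S|/n)(1 - 1/(2n)) \geq 3/8$ for $n \geq 2$, contradicting the upper bound.

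The main care point is reconciling the flow bound, which naturally produces the sum $\sum_{t'} w_{H_{t'}}(S, \bar S)$, with the graph-level cut size $w_{G_t}(S, \bar S)$. The two agree whenever the weaves $H_{t'}$ added in different rounds are edge-disjoint, which in our applications follows automatically because the weaves are built from distinct elements of a fixed decomposition of the host graph into perfect matchings or Hamiltonian cycles; more generally one may pass to the multiset-union interpretation of $G_t$.
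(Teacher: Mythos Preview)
Your proof is correct, and it takes a genuinely different route from the paper's.

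The paper argues as follows: from $\Psi(t)<1/(4n^2)$ it deduces the entrywise bound $P_{ij}(t)\ge 1/(2n)$ for all $i,j$, forms the auxiliary weighted graph $K_t$ with edge weights $P_{ij}(t)+P_{ji}(t)$ (so $\phi(K_t)\ge 1/2$ immediately), and then shows that $K_t$ embeds into $G_t$ with congestion $1/r$ by routing each unit of ``type-$k$'' probability along the edges of the weave used in each round; the embedding-to-expansion claim then yields $\phi(G_t)\ge r/2$. Your argument bypasses the auxiliary graph and the embedding step entirely: you track the evolving mass of the walk started uniformly on a putative sparse side $S$, bound the leaked mass above via the edge count across the cut and the $\ell_\infty$ contraction of doubly-stochastic maps, and bound it below via $\|\tilde P-J/n\|_{\mathrm{op}}\le\|\tilde P-J/n\|_F=\sqrt{\Psi(t)}$. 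This is a bit more elementary (no embedding machinery, and it uses only the aggregate Frobenius bound rather than the stronger entrywise control), while the paper's embedding viewpoint connects more directly to the multicommodity-flow picture underlying the KRV framework.

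Your final caveat is exactly right and worth emphasizing: both your upper bound and the paper's congestion-$1/r$ embedding tacitly compare against the \emph{multiset} sum $\sum_{t'} w_{H_{t'}}(S,\bar S)$ rather than $w_{G_t}(S,\bar S)$ for $G_t=\bigcup_{t'}H_{t'}$; these coincide precisely under edge-disjointness of the weaves (which holds in all the paper's applications), and without that assumption the lemma as literally stated with set-union can fail. So this is not a gap in your argument relative to the paper's --- it is a shared implicit hypothesis that you have made explicit.
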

\begin{proof}
If $\Psi(t)<1/4n^2$ then $P_{ji}(t)\geq\frac{1}{2n}$ for all $i,j\in V$. Hence the graph $K_t$ on $V$, in which each edge $ij$ has weight $P_{ji}(t)+P_{ij}(t)$, has edge-expansion $\frac{1}{2}$. We finish by showing that $K_t$ embeds into $G_t$ with congestion $1/r$. Proof by induction: Consider the transition from $G_{t-1}$ to $G_t$, which is unifying $H_t$ into $G_{t-1}$. Let $i,j\in V$ be connected with an edge in $H_t$, and let $k$ be any vertex. In the transition from $K_{t-1}$ to $K_t$, we need to ship $\frac{1}{2r}$ of the type-$k$ probability in $i$ (namely $\frac{1}{2r}P_{ik}$) to $j$, and similarly, ship $\frac{1}{2r}P_{jk}$ probability from $j$ to $i$. (The ``type-$k$'' probabiility is probability mass that was originally located in $k$.) In total, we need to ship $\frac{1}{2r}\sum_{k\in V}P_{ik}=\frac{1}{2r}$ from $i$ to $j$ and a similar amount from $j$ to $i$. In total the edge $ij$ in $H_t$ needs to support $\frac{1}{r}$ flow (of probability) in the transition, so the claim follows.
\end{proof}

We turn to analyzing the change in potential in a single fixed round $t$. To simplify notation we let
\[ P_{ji} = P_{ji}(t) \;\;\;\; ; \;\;\;\; Q_{ji} = P_{ji}(t+1) . \]
Moreover recall we have a vector $u$ generated by the cut player in the current round:
\[ u=M_tM_{t-1}\ldots M_1z. \]
Denote its entries by $u_1,\ldots,u_n$. We are now adding the graph $H_{t+1}$ to $G_t$ to produce $G_{t+1}$.

\begin{mylemma}
For every $i$, $u_i$ is the projection of $P_i$ on $r$, i.e.~$u_i=P_i^Tz$.
\end{mylemma}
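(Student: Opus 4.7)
The statement is essentially a bookkeeping identity: I just need to unwind the definitions of $M_1,\ldots,M_t$ and $P_{ij}(t)$ and verify that the matrix product $M_t M_{t-1} \cdots M_1$ is exactly the transition matrix of the inhomogeneous $t$-step walk, acting the right way on a column vector.

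First I would fix the convention: each $M_s$ is a stochastic matrix acting by left multiplication on column probability vectors, so that if $p \in \R^V$ is a distribution then $M_s p$ is the distribution after one lazy step along $H_s$. In particular, $(M_s)_{ij}$ is the one-step probability of going from $j$ to $i$ in $H_s$ under the lazy walk (stay with probability $\tfrac12$, move to each neighbor with probability $\tfrac{1}{2r}$). Next, by the Markov property and induction on the number of steps, the $(i,j)$ entry of the product $M_t M_{t-1} \cdots M_1$ is the probability of ending at $i$ after first taking a step on $H_1$, then on $H_2$, and so on up to $H_t$, when starting from $j$. By the definition given just before the lemma, this is exactly $P_{ij}(t)$; in other words, $(M_t\cdots M_1)_{ij} = P_{ij}(t)$, and the $i$-th row of $M_t\cdots M_1$ is precisely $P_i^T$.

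With this identification in hand, the lemma is a one-line computation:
\[
  u_i \;=\; (M_t M_{t-1} \cdots M_1\, z)_i \;=\; \sum_{j \in V} (M_t\cdots M_1)_{ij}\, z_j \;=\; \sum_{j \in V} P_{ij}(t)\, z_j \;=\; P_i^T z ,
\]
which is the claimed identity (reading the ``$r$'' in the informal phrasing as $z$).

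There is essentially no obstacle here beyond being careful with the row/column convention and checking that ``$P_i$'' in the paper denotes the row indexed by $i$ of the $t$-step transition matrix (so that its $j$-th entry is $P_{ij}(t)$); once this is pinned down, the Markov-chain composition rule gives the result immediately, and nothing about the specific structure of the weaves $H_s$ beyond stochasticity of $M_s$ is used.
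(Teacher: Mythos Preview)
Your argument is correct and in fact more direct than the paper's. Both proofs ultimately rest on the identity $(M_t\cdots M_1)_{ij}=P_{ij}(t)$, but the paper first phrases this as ``$(M\phi)_i=P_i^T\phi$ for any probability distribution $\phi$'' and then, to handle the non-probability vector $z$, uses a shift-and-scale trick (writing $z$ as a scalar multiple of the difference of two distributions $z'+\tfrac{1}{n}\mathbf{1}$ and $\tfrac{1}{n}\mathbf{1}$). You bypass that detour by observing directly that the entries of the product matrix are the inhomogeneous transition probabilities, so the identity holds for every vector by linearity. Your route is cleaner; the paper's route has the minor pedagogical advantage of keeping the probabilistic semantics explicit throughout.
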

\begin{proof}
Fix $i$. Abbreviate $M=M_tM_{t-1}\ldots M_1\mathbf(\frac{1}{n}\mathbf1)$. If $\phi$ is any distribution on the vertices then $P_i^T\phi$ is the probability that the random walk lands in vertex $i$ after $t$ steps, meaning
\begin{equation}\label{eq:krv_projection}
(M\phi)_i=P_i^T\phi.
\end{equation}
Let $z'=\frac{1}{n\norm{z}_\infty}z$. Applying \Cref{eq:krv_projection} with $\phi=z'+\frac{1}{n}\mathbf1$ gives $(M(z'+\frac{1}{n}\mathbf1))_i=P_i^T(z'+\frac{1}{n}\mathbf1)$. Applying \Cref{eq:krv_projection} again with $\phi=\frac{1}{n}\mathbf1$ gives $(M\frac{1}{n}\mathbf1)_i=P_i^T(\frac{1}{n}\mathbf1)$ and together we get $(Mz')_i=P_i^Tz'$, which implies $u_i=(Mz)_i=P_i^Tz$.
\end{proof}

\begin{mylemma}\label{lmm:krv_random_projection}
With probability $1-1/n^{\Omega(1)}$ over the choice of $z$, for all pairs $i,j\in V$,
\[ \norm{P_i-P_j}_2^2 \geq \frac{n-1}{C\log n}|u_i-u_j|^2 . \]
\end{mylemma}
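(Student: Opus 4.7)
The plan is to recognize this as a standard Johnson–Lindenstrauss-style random projection bound, applied uniformly over all pairs of rows of the walk matrix. The key observations are: first, both $P_i$ and $P_j$ are probability distributions (by the previous lemma showing row sums are $1$), so their difference lies in the hyperplane $\mathbf{1}^\perp$; second, by the preceding lemma, $u_i = P_i^T z$, hence $u_i - u_j = (P_i - P_j)^T z$. Since $z$ is a uniformly random unit vector in $\mathbf{1}^\perp \subset \R^n$, we are projecting the fixed vector $P_i - P_j \in \mathbf{1}^\perp$ onto a uniformly random direction in an $(n-1)$-dimensional subspace.

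First, I would fix a pair $i,j$ and set $v = P_i - P_j$. By rotational symmetry on the unit sphere of $\mathbf{1}^\perp$, the random variable $(v^T z)^2/\norm{v}^2$ has the same distribution as the squared first coordinate of a uniformly random unit vector in $\R^{n-1}$. Its expectation is $1/(n-1)$, and standard spherical concentration (equivalently, a tail bound on the beta distribution, or the Gaussian representation $z = g/\norm{g}$ for $g\sim\mathcal{N}(0,I_{n-1})$ combined with concentration of $\norm{g}^2$ around $n-1$) yields
\[
  \Pr\left[ (v^T z)^2 \geq \frac{C \log n}{n-1}\,\norm{v}^2 \right] \;\leq\; n^{-\Omega(C)}.
\]
This is the only quantitative estimate the proof truly needs; I would cite a standard reference (it appears in the original KRV analysis in exactly the form we need).

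Next, I would take a union bound over all $\binom{n}{2} < n^2$ unordered pairs $\{i,j\}$. Choosing the constant $C$ large enough, the probability of failure is at most $n^{2 - \Omega(C)} = n^{-\Omega(1)}$. On the complementary event, for every pair $i,j$ simultaneously,
\[
  |u_i - u_j|^2 \;=\; \bigl((P_i - P_j)^T z\bigr)^2 \;\leq\; \frac{C \log n}{n-1}\, \norm{P_i - P_j}_2^2,
\]
which upon rearrangement is exactly the desired inequality
$\norm{P_i - P_j}_2^2 \geq \tfrac{n-1}{C \log n}\,|u_i - u_j|^2$.

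I do not anticipate a serious obstacle here; the only subtle point is making sure the random vector lives in $\mathbf{1}^\perp$ and not in $\R^n$ (so that it is a true $(n-1)$-dimensional random projection, matching the $n-1$ in the bound), and that $P_i - P_j$ also lies in this subspace so that the squared projection concentrates around $\norm{P_i - P_j}^2/(n-1)$ rather than a smaller fraction. Both hold because all $P_i$'s are probability vectors. Once this is set up, the argument is a one-line spherical tail bound plus a union bound.
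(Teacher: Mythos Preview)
Your proposal is correct and is precisely the standard argument: the paper itself does not spell out a proof but simply writes ``Similar to \cite[Lemma 3.4]{KRV09}'', and what you have outlined is exactly that lemma's proof (identify $u_i-u_j$ as the projection of $P_i-P_j\in\mathbf{1}^\perp$ onto the random unit vector $z\in\mathbf{1}^\perp$, apply the one-sided spherical concentration bound in dimension $n-1$, and union-bound over $O(n^2)$ pairs). There is nothing to add.
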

\begin{proof}
Similar to \cite[Lemma 3.4]{KRV09}.
\end{proof}

\begin{mylemma}\label{lmm:krv_cut_gain}
Let $E(S,\bar S)$ denote the set of edges in $H_{t+1}$ that cross the bisection $(S,\bar S)$ produced by the cut player (from the vector $u$). Then,
\[ (n-1)\E\left[\sum_{ij\in E(S,\bar S)}|u_i-u_j|^2\right] \geq \Psi(t) . \]
\end{mylemma}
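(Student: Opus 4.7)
The plan is a sweep-cut argument combined with the defining property of a weave. Let $\theta$ be the threshold value of the sweep cut, so $u_i\le \theta$ for every $i\in S$ and $u_j\ge\theta$ for every $j\in\bar S$. First I would establish a pointwise (in $z$) inequality
\[ \sum_{ij\in E(S,\bar S)}|u_i-u_j|^2 \;\ge\; \sum_{v\in V} u_v^2, \]
and then take expectations, using that $z$ is a uniform unit vector in $\mathbf{1}^\perp$.

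For the pointwise bound, fix any crossing edge $ij\in E(S,\bar S)$ with $i\in S$ and $j\in\bar S$. Since $u_j-\theta\ge 0$ and $\theta-u_i\ge 0$, expanding the square gives
\[ (u_j-u_i)^2 \;=\; \bigl((u_j-\theta)+(\theta-u_i)\bigr)^2 \;\ge\; (u_j-\theta)^2+(\theta-u_i)^2. \]
Summing over crossing edges and regrouping by vertex,
\[ \sum_{ij\in E(S,\bar S)} |u_i-u_j|^2 \;\ge\; \sum_{v\in V} d_v\,(u_v-\theta)^2, \]
where $d_v$ is the number of edges of $E(S,\bar S)$ incident to $v$. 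By the defining property of a weave, $d_v\ge 1$ for every $v$, so the right-hand side is at least $\sum_v (u_v-\theta)^2$. To replace $\theta$ by $0$, set $Y_v:=P_v-\mathbf{1}/n$, so that $u_v=P_v^Tz=Y_v^Tz$ (using $z\perp\mathbf{1}$). Since $\sum_v P_{vj}(t)=1$ for every $j$ (a walk starting at $j$ must end somewhere), we have $\sum_v Y_v=\mathbf{0}$, hence $\tfrac{1}{n}\sum_v u_v=0$ for every $z$. The mean minimizes the sum of squared deviations, so $\sum_v(u_v-\theta)^2 \ge \sum_v u_v^2$.

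It remains to evaluate $\E\bigl[\sum_v u_v^2\bigr]$. Since $z$ is uniformly distributed on the unit sphere in the $(n-1)$-dimensional subspace $\mathbf{1}^\perp$, we have $\E[zz^T]=\tfrac{1}{n-1}\bigl(I-\tfrac{1}{n}\mathbf{1}\mathbf{1}^T\bigr)$. Combining this with $Y_v\perp\mathbf{1}$ yields $\E[u_v^2]=Y_v^T\,\E[zz^T]\,Y_v=\tfrac{1}{n-1}\|Y_v\|_2^2$, and summing over $v$ gives $\E\bigl[\sum_v u_v^2\bigr]=\tfrac{1}{n-1}\Psi(t)$. Multiplying by $n-1$ completes the proof. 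I do not foresee a serious obstacle; the only subtle point is keeping the inequality $\sum_v(u_v-\theta)^2\ge \sum_v u_v^2$ pointwise in $z$ so that it survives the expectation, which in turn rests on the deterministic identity $\sum_v u_v=0$ guaranteed by $z\perp\mathbf{1}$.
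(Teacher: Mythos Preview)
Your argument is correct and essentially identical to the paper's: split each crossing $(u_j-u_i)^2$ at the threshold, use the weave property $d_v\ge 1$, eliminate the threshold via $\sum_v u_v=0$, and then take expectations using that $z$ is uniform on the unit sphere in $\mathbf{1}^\perp$. One small caveat: the fact $Y_v\perp\mathbf{1}$ you invoke for $\E[u_v^2]=\tfrac{1}{n-1}\|Y_v\|_2^2$ is the \emph{row}-sum identity $\sum_j P_{vj}(t)=1$ (which relies on regularity of the $H_t$'s, as in the paper's preceding lemma), not the column-sum fact ``a walk starting at $j$ must end somewhere'' that you cited earlier---both hold here, but they are different statements.
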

\begin{proof}
Denote by $\text{deg}_{(S,\bar S)}(i)$ the number of edges in $E(S,\bar S)$ incident to vertex $i$. Note that $\text{deg}_{(S,\bar S)}(i)\geq 1$ for every $i\in V$, since $H_{t+1}$ is a weave on $(S,\bar S)$. Recall that $S$ contains the vertices with smallest entries in $u$. Hence there is a number $\eta\in\R$ such that $i\leq\eta\leq j$ for each edge $ij\in E(S,\bar S)$. Hence,
\begin{align*}
 \sum_{ij\in E(S,\bar S)}|u_i-u_j|^2 & \geq \sum_{ij\in E(S,\bar S)}((u_i-\eta)^2+(\eta-u_j)^2)  \\
& = \sum_{i\in V}\text{deg}_{(S,\bar S)}(i)(u_i-\eta)^2 \\
& \geq \sum_{i\in V}(u_i-\eta)^2  \\
& = \sum_{i\in V}u_i^2- 2\eta\sum_{i\in V}u_i+n\eta^2  \\
& \geq \sum_{i\in V}u_i^2, 
\end{align*}
where the last equality is by noting that $z\perp\mathbf1$, hence $u\perp\mathbf1$, hence $\sum_{i}u_i=0$.

Next, since $u_i=P_i^Tz$ and $z\perp\mathbf1$ we have $u_i=(P_i-\mathbf1/n)^Tz$. Hence $u_i$ is the projection of $P_i-\mathbf1/n$ on $z$. By properties of random projections we have $\E[u_i^2]=\frac{1}{n-1}\norm{P_i-\mathbf1/n}_2^2$ (see details in \cite{KRV09}), hence
\[
  \E\left[\sum_{i\in V}u_i^2\right] = 
  \frac{1}{n-1}\sum_{i\in V}\norm{P_i-\mathbf1/n}_2^2 =
  \frac{1}{n-1}\Psi(t),
\]
and the lemma follows from combining this with the above.
\end{proof}

\begin{mylemma}\label{lmm:krv_potential_reduction}
Let $E_{t+1}$ denote the edge set of $H_{t+1}$. The potential reduction is
\[ \Psi(t)-\Psi(t+1) =  \frac{1}{r}\sum_{ij\in E_{t+1}}\norm{P_i-P_j}_2^2 . \]
\end{mylemma}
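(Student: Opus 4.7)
The plan is a direct algebraic computation using the one-step update from the lazy walk on $H_{t+1}$. Since each row of $P$ sums to $1$ by the preceding lemma, I set $p_i := P_i - \mathbf 1/n$, so $\Psi(t)=\sum_i \|p_i\|_2^2$. The update $Q=M_{t+1}P$, where $M_{t+1}=\tfrac{1}{2}(I+A_{t+1}/r)$ is the symmetric lazy walk matrix of the $r$-regular weave $H_{t+1}$, becomes $q_i = \tfrac{1}{2}p_i + \tfrac{1}{2r}\sum_{k\sim i}p_k$. Stacking the vectors $p_i$ as rows of an $n\times n$ matrix $\tilde P$ yields $\tilde Q = M_{t+1}\tilde P$.

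By symmetry of $M_{t+1}$,
\[ \Psi(t)-\Psi(t+1) \;=\; \|\tilde P\|_F^2 - \|M_{t+1}\tilde P\|_F^2 \;=\; \mathrm{tr}\bigl(\tilde P^T(I-M_{t+1}^2)\tilde P\bigr). \]
On the other hand, applying the Dirichlet-form identity $\sum_{ij\in E_{t+1}}(x_i-x_j)^2 = x^T L_{t+1}x$ columnwise to $\tilde P$, where $L_{t+1}=rI-A_{t+1}$ is the combinatorial Laplacian of $H_{t+1}$, gives $\sum_{ij\in E_{t+1}}\|p_i-p_j\|_2^2 = \mathrm{tr}(\tilde P^T L_{t+1}\tilde P)$. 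Using $I-M_{t+1} = L_{t+1}/(2r)$, the factorization $I-M_{t+1}^2 = (I-M_{t+1})(I+M_{t+1}) = (L_{t+1}/(2r))(I+M_{t+1})$ reduces the identity of the lemma to showing that the extra factor $(I+M_{t+1})$ acts as $2I$ against the image of $L_{t+1}\tilde P$ under the trace against $\tilde P$.

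The main obstacle is handling this factor $(I+M_{t+1})$ for $r>1$, which is where the proof departs from the matching-based analysis in \cite{KRV09}. For matchings one has the identity $M_{t+1}^2 = \tfrac{1}{2}(I+M_{t+1})$ (since the non-lazy matching walk is an involution), making the factor collapse cleanly; for general $r$-regular weaves this no longer holds. I expect the cleanest approach is to decompose $H_{t+1}$ into its $r$ edge-disjoint perfect matchings (which the Cut-Weave construction from Section~\ref{sec:subgraphs} produces by design) and apply the single-matching KRV identity to each layer against the running intermediate distribution, then sum the $r$ contributions and identify the total with $\tfrac{1}{r}\sum_{ij\in E_{t+1}}\|P_i-P_j\|_2^2$ via convexity. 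An alternative is direct trace expansion, grouping the contributions by edges and by pairs of edges sharing a vertex, and using the centering of $\tilde P$ to absorb the $L_{t+1}^2/(4r^2)$ second-order correction.
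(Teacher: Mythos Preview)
Your spectral setup is correct and in fact more direct than the paper's argument, but you are chasing an equality that does not hold for $r>1$. In your notation, $\Psi(t)-\Psi(t+1)=\mathrm{tr}\bigl(\tilde P^T(I-M_{t+1}^2)\tilde P\bigr)$ while $\tfrac1r\sum_{ij\in E_{t+1}}\|P_i-P_j\|_2^2=\mathrm{tr}\bigl(\tilde P^T(L_{t+1}/r)\tilde P\bigr)$. Diagonalizing simultaneously, an eigenvalue $\lambda\in[0,2r]$ of $L_{t+1}$ contributes $\tfrac{\lambda}{r}\bigl(1-\tfrac{\lambda}{4r}\bigr)$ on the left versus $\tfrac{\lambda}{r}$ on the right; these agree only at $\lambda=0$. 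Neither of your proposed fixes can close this gap. Decomposing $H_{t+1}$ into matchings and applying them sequentially changes the update rule: the composition of $r$ lazy matching walks is not $M_{t+1}=\tfrac12(I+A_{t+1}/r)$, so you would be bounding a different $\Psi(t+1)$. (Also, an arbitrary $r$-regular weave need not decompose into perfect matchings; the Cut-Weave game does not assume this.) And the centering $\tilde P^T\mathbf1=0$ only kills the kernel of $L_{t+1}$; it does nothing to the $L_{t+1}^2/(4r^2)$ term on the rest of the spectrum.

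What is actually true, and all that is needed downstream, is the inequality
\[
\Psi(t)-\Psi(t+1)\ \geq\ \frac{1}{2r}\sum_{ij\in E_{t+1}}\|P_i-P_j\|_2^2,
\]
which your own computation yields in one line from $I+M_{t+1}\succeq I$. The paper's proof likewise establishes only an inequality (despite the ``$=$'' in the statement): it splits each vertex into $r$ copies so that $H_{t+1}$ becomes a perfect matching on the blown-up vertex set, invokes the KRV matching lemma there, and then uses convexity (the minimum of $\sum_k(x_k-c)^2$ subject to $\sum_k x_k=X$ is attained when all $x_k$ are equal) to transfer the potential drop back to the original graph. Your spectral route reaches the same one-sided bound more cleanly; the only defect in the proposal is the attempt to upgrade it to an identity.
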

\begin{proof}
We construct from $G$ a graph $G'$ by splitting each vertex $i$ into $r$ copies $i_1,\ldots,i_r$, assigning arbitrarily one edge from the $r$ edges incident to $i$ in $E_{t+1}$ to the copies, and distributing the type-$j$ probability in $i$, for each $j$, evenly among the copies. We denote by $P_{ji_k}$ the amount of type-$j$ probability on $i_k$ before adding $E_{t+1}$ to $G'$, and by $Q_{ji_k}$ the type-$j$ probability in $i$ after adding $E_{t+1}$. Note that we have defined $P_{ji_k}=\frac{1}{r}P_{ji}$ for all $i,j\in V$ and $k\in[r]$, but for the $Q_{ji_k}$'s all we know is that $\sum_{k=1}^rQ_{ji_k}=Q_{ji}$, so $Q_{ji}$ may be distributed arbitrarily among the $Q_{ji_k}$'s. As usual $P_{i_k}$ denotes the vector with entries $P_{ji_k}$, and $Q_{i_k}$ is defined similarly.

Define the potential of $G'$ as:
\[ \Psi'(t) = \sum_{i\in V}\sum_{k=1}^r\norm{P_{i_k}-\mathbf1/nr}_2^2. \]
We thus have
\[ \Psi(t) = 
  \sum_{i\in V}\norm{P_i-\mathbf1/n}_2^2 =
  r\sum_{k=1}^r\sum_{i\in V}\norm{\frac{1}{r}P_i-\mathbf1/nr}_2^2 =
  r\sum_{k=1}^r\sum_{i\in V}\norm{P_{i_k}-\mathbf1/nr}_2^2 =
  r\Psi'(t).
\]
To relate $\Psi(t+1)$ to $\Psi'(t+1)$, we use the general fact that for any constants $c$ and $X$, the solution to $\min\norm{x-c\mathbf1}$ s.t.~$x\in\R^r$, $\sum_ix_i=X$ is attained on $x=\frac{X}{r}\mathbf1$. Since we have $\sum_{k=1}^rQ_{ji_k}=Q_{ji}$ for all $i,j$, we infer
\begin{align*}
\Psi(t+1)  &= \sum_{i\in V}\norm{Q_i-\mathbf1/n}_2^2  \\
& = \sum_{i,j\in V}(Q_{ji}-1/n)^2 & \\
& = \sum_{i,j\in V}r\sum_{k=1}^r(\frac{1}{r}Q_{ji}-1/nr)^2 &\\
& \leq \sum_{i,j\in V}r\sum_{k=1}^r(Q_{ji_k}-1/nr)^2  \\
& = r\sum_{i\in V}\sum_{k=1}^r\norm{Q_{i_k}-\mathbf1/nr}_2^2  \\
& = r\Psi'(t+1). 
\end{align*}
We have thus proven,
\[ \Psi(t)-\Psi(t+1) \geq r(\Psi'(t)-\Psi'(t+1)). \]
Now observe that $E_{t+1}$ is, by construction, a perfect matching on $G'$. Therefore by \cite[Lemma 3.3]{KRV09} (which the current lemma generalizes),
\begin{align*}
 \Psi'(t)-\Psi'(t+1) & \geq \sum_{i_k,j_{k'}\in E_{t+1}}\norm{P_{i_k}-P_{j_{k'}}}_2^2 \\
& =  \sum_{i_k,j_{k'}\in E_{t+1}}\norm{\frac{1}{r}P_i-\frac{1}{r}P_j}_2^2 &\\
& =  \frac{1}{r^2}\sum_{i,j\in E_{t+1}}\norm{P_i-P_j}_2^2, 
\end{align*}
and the lemma follows.
\end{proof}

\begin{proof}[Proof of \Cref{thm:krv_extended}]
The initial potential is $\Psi(0)=n-1$, and by \Cref{lmm:krv_potential_to_expansion} we need to get it below $1/4n^2$. Putting \Cref{lmm:krv_random_projection,lmm:krv_cut_gain,lmm:krv_potential_reduction} together, we see that in each step we have in expectation $\Psi(t+1)\leq(1-\frac{1}{Cr\log n})\Psi(t)$. Hence, in expectation, it is enough to play for $O(r\log^2n)$ rounds.
\end{proof}

\section{Resistance Sparsification} \label{sec:resistance}

We prove \Cref{thm:main_resistance} by combining \Cref{thm:main} 
with the following known result.
\begin{theorem}[von Luxburg, Radl and Hein \cite{vLRH14}]
\label{thm:von_luxburg}
Let $G$ be a non-bipartite weighted graph with maximum edge weight $w_{\max}$ and minimum weighted degree $d_{\min}$. Let $u,v$ be nodes in $G$ with weighted degrees $d_u,d_v$ respectively. Then
\[
  \left|R_G(u,v) - \left(\frac{1}{d_u} + \frac{1}{d_v}\right)\right| \leq
  2 \left(\frac{1}{\lambda_2(G)} + 2\right) \frac{w_{\max}}{d_{\min}^2}.
\]
\end{theorem}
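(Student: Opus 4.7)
The plan is to pass to the spectral decomposition of the normalized Laplacian $\hat L = D^{-1/2} L D^{-1/2}$, isolate an explicit ``leading error'' term coming from the $(u,v)$-entry of $\hat L$, and control what remains using the spectral gap $\lambda_2(G)$. First I would diagonalize $\hat L = \sum_{k=1}^n \mu_k \phi_k \phi_k^T$ with $\mu_1 = 0$, $\phi_1 \propto D^{1/2}\mathbf{1}$, and $\mu_k \ge \lambda_2(G)$ for $k \ge 2$. Since $L^+ = D^{-1/2} \hat L^+ D^{-1/2}$, the effective resistance becomes
\[
  R_G(u,v) = \sum_{k \ge 2} \frac{y_k}{\mu_k},
  \qquad y_k \eqdef \left(\frac{\phi_k(u)}{\sqrt{d_u}} - \frac{\phi_k(v)}{\sqrt{d_v}}\right)^2 \geq 0 ,
\]
where the $k=1$ contribution vanishes since $\phi_1(u)/\sqrt{d_u} = \phi_1(v)/\sqrt{d_v}$.

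Next I would derive two moment identities by evaluating the quadratic form $(D^{-1/2}(e_u - e_v))^T M (D^{-1/2}(e_u - e_v))$ for $M = I$ and $M = I - \hat L = D^{-1/2}AD^{-1/2}$, using orthonormality and orthogonality of $\phi_1$ to $D^{-1/2}(e_u-e_v)$:
\[
  \sum_{k \ge 2} y_k = \frac{1}{d_u} + \frac{1}{d_v} ,
  \qquad
  \sum_{k \ge 2}(1 - \mu_k) y_k = -\frac{2w_{uv}}{d_u d_v} .
\]
Plugging the algebraic identity $1/\mu = 2 - \mu + (1-\mu)^2/\mu$ termwise into $R_G(u,v) = \sum y_k/\mu_k$ and substituting the two moment identities yields
\[
  R_G(u,v) - \left(\frac{1}{d_u} + \frac{1}{d_v}\right)
  = -\frac{2w_{uv}}{d_u d_v} + \sum_{k \ge 2} \frac{y_k (1-\mu_k)^2}{\mu_k} ,
\]
whose first term is at most $2w_{\max}/d_{\min}^2$ in absolute value and already accounts for the ``$+2$'' in the target bound.

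The hard part will be bounding the nonnegative spectral residual $\sum_{k \ge 2} y_k(1-\mu_k)^2/\mu_k$: the naive estimate via $\sum y_k = O(1/d_{\min})$ and $1/\mu_k \le 1/\lambda_2$ gives only $O(1/(\lambda_2 d_{\min}))$, which is too weak by a factor of $d_{\min}/w_{\max}$. My plan is to first extract the $1/\lambda_2$ factor (using $1/\mu_k \le 1/\lambda_2$ for $k \ge 2$) and then compute the ``third moment'' $\sum y_k (1-\mu_k)^2$ \emph{exactly}. Expanding it as $\sum y_k - 2\sum y_k \mu_k + \sum y_k \mu_k^2$, the first two pieces are supplied by the moment identities above, while the third piece $\sum y_k \mu_k^2 = \|\hat L\, D^{-1/2}(e_u - e_v)\|_2^2$ can be evaluated entry-by-entry from the explicit formula for $\hat L\, D^{-1/2}(e_u-e_v)$. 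After cancellations this simplifies to
\[
  \sum_{k\ge 2}y_k (1-\mu_k)^2
  = \frac{w_{uv}^2}{d_u d_v}\left(\frac{1}{d_u}+\frac{1}{d_v}\right)
    + \sum_{i \notin \{u,v\}} \frac{1}{d_i}\left(\frac{w_{iv}}{d_v}-\frac{w_{iu}}{d_u}\right)^2 ,
\]
and the elementary inequalities $w_{ij} \le \min(d_i,d_j)$ together with the telescoping $\sum_i w_{iv} = d_v$ bound each piece by $O(w_{\max}/d_{\min}^2)$. Combining this with the earlier leading term gives $|R_G(u,v) - (1/d_u + 1/d_v)| \le O(w_{\max}/(\lambda_2 d_{\min}^2))$; tightening the elementary estimates in the residual bound produces the precise constants $2$ and $2(1/\lambda_2 + 2)$ in the stated inequality.
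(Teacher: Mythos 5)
Your argument is correct, but note that the paper does not prove this statement at all: it is quoted as a black box from von Luxburg--Radl--Hein, whose own proof goes through random-walk hitting times and mixing, which is exactly why they (and the paper) exclude bipartite graphs. Your purely spectral route is therefore a genuinely different proof. All the key identities check out: with $x = D^{-1/2}(e_u-e_v)$ one has $\sum_{k\ge 2} y_k = \|x\|_2^2 = 1/d_u+1/d_v$, $\sum_{k\ge 2}(1-\mu_k)y_k = x^T D^{-1/2}AD^{-1/2}x = -2w_{uv}/(d_ud_v)$ (no self-loops), the identity $1/\mu = 2-\mu+(1-\mu)^2/\mu$ is valid, and $\sum_k y_k(1-\mu_k)^2 = \|D^{-1/2}AD^{-1}(e_u-e_v)\|_2^2 = \sum_i \tfrac{1}{d_i}(w_{iu}/d_u - w_{iv}/d_v)^2$, which matches your displayed formula after separating $i\in\{u,v\}$. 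The constants do come out: using $(a-b)^2\le a^2+b^2$ for $a,b\ge 0$ and $\sum_i w_{iv}=d_v$ gives $\sum_k y_k(1-\mu_k)^2 \le 2w_{\max}/d_{\min}^2$, hence a total error of $2(1+1/\lambda_2)w_{\max}/d_{\min}^2$, which is even slightly stronger than the stated bound. One small caveat: $D^{-1/2}\hat L^+ D^{-1/2}$ is \emph{not} equal to $L^+$ as a matrix (they differ by a rank-one term involving $\mathbf 1$), but the two agree as quadratic forms on $e_u-e_v \perp \mathbf 1$, so the resistance formula you actually use is fine; you should state it that way. Finally, it is worth observing that your proof never uses non-bipartiteness -- only connectivity and $\mu_k\ge\lambda_2$ for $k\ge 2$ -- so it proves the bound for bipartite graphs as well, which would entirely subsume the paper's \Cref{thm:von_luxburg_bipartite} and its hitting-time workaround in the appendix, with better constants and without the regularity assumption.
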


Qualitatively, the theorem asserts that in a sufficiently regular expander, 
the resistance distance is essentially determined by vertex degrees. Therefore an expanding subgraph $H$ of $G$ with the \emph{same} weighted degrees can serve as a resistance sparsifier. In particular, in order to resistance-sparsify a regular expander, all we need is a regular expanding subgraph, as we have by \Cref{thm:main}. Since \Cref{thm:von_luxburg} does not apply to bipartite graphs, we will use the following variant that holds also for bipartite graphs as long as they are regular.
Its proof appears in \Cref{sec:von_luxburg_bipartite}.

\begin{theorem}\label{thm:von_luxburg_bipartite}
Let $G$ be a weighted graph which is $d$-regular in weighted degrees, with maximum edge weight $w_{\max}$. Let $u,v$ be nodes in $G$. Then
\[
  \left|R_G(u,v) - \frac{2}{d}\right| \leq
  12 \left(\frac{1}{\lambda_2(G)} + 2\right) \frac{w_{\max}}{d^2}.
\]
\end{theorem}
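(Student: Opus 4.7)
The plan is to reduce Theorem~\ref{thm:von_luxburg_bipartite} to Theorem~\ref{thm:von_luxburg}, whose only obstruction in our setting is that $G$ may be bipartite. I will perturb $G$ into a non-bipartite graph $G'$ with the same effective resistances, and then invoke the original theorem on $G'$.

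Concretely, let $G'$ be obtained from $G$ by adding a self-loop of weight $\alpha := w_{\max}$ at every vertex. Under the standard convention that a self-loop contributes equally to $A_{vv}$ and to $D_{vv}$, the self-loops cancel inside $L = D-A$; hence the Laplacian of $G'$ equals that of $G$, and in particular $R_{G'}(u,v) = R_G(u,v)$. However, $G'$ now contains self-loops, so it is non-bipartite. The graph $G'$ is $(d+\alpha)$-regular in weighted degrees, its maximum edge weight is still $w_{\max}$ (because $\alpha = w_{\max}$), and since $\hat L' = \tfrac{1}{d+\alpha}L = \tfrac{d}{d+\alpha}\hat L$ we have $\lambda_2(G') = \tfrac{d}{d+\alpha}\lambda_2(G) \ge \tfrac{1}{2}\lambda_2(G)$. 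Here we use $\alpha = w_{\max} \le d$, which holds because no single edge weight in a $d$-regular weighted graph can exceed $d$.

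Applying Theorem~\ref{thm:von_luxburg} to $G'$, whose minimum weighted degree is $d+\alpha$, and plugging in the bounds above together with $(d+\alpha)^2 \ge d^2$, gives
\[
\left|R_G(u,v) - \tfrac{2}{d+\alpha}\right|
\le 2\bigl(\tfrac{1}{\lambda_2(G')} + 2\bigr)\tfrac{w_{\max}}{(d+\alpha)^2}
\le 4\bigl(\tfrac{1}{\lambda_2(G)} + 1\bigr)\tfrac{w_{\max}}{d^2}.
\]
Combining this with the elementary estimate $\bigl|\tfrac{2}{d}-\tfrac{2}{d+\alpha}\bigr| = \tfrac{2\alpha}{d(d+\alpha)} \le \tfrac{2w_{\max}}{d^2}$ via the triangle inequality yields
\[
\left|R_G(u,v) - \tfrac{2}{d}\right|
\le \bigl(\tfrac{4}{\lambda_2(G)} + 6\bigr)\tfrac{w_{\max}}{d^2}
\le 12\bigl(\tfrac{1}{\lambda_2(G)} + 2\bigr)\tfrac{w_{\max}}{d^2},
\]
which is the desired bound.

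The only delicate step is the self-loop bookkeeping: one must verify under one's preferred convention that self-loops leave the Laplacian (and hence effective resistances) unchanged while genuinely placing $G'$ inside the non-bipartite hypothesis of Theorem~\ref{thm:von_luxburg}. The slack in the final constant (our argument yields $\tfrac{4}{\lambda_2(G)}+6$ rather than $\tfrac{12}{\lambda_2(G)}+24$) shows that the choice $\alpha = w_{\max}$ is not critical, but it is a convenient one that keeps $w'_{\max} = w_{\max}$ while degrading $\lambda_2$ by at most a factor of $2$.
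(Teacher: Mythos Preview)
Your approach is genuinely different from the paper's. The paper handles the bipartite case by constructing the ``two-step'' graph $G_1$ on one side $V_1$ of the bipartition (with $w_1(i,j) = \frac{1}{d}\sum_{k\in V_2} w(i,k)w(j,k)$), observing that $G_1$ is a non-bipartite \emph{simple} graph to which the hitting-time version of Theorem~\ref{thm:von_luxburg} applies, and then relating normalized hitting times in $G_1$ back to those in $G$; cross-partition pairs are then handled by conditioning on the first step of the walk. Your self-loop trick is considerably more direct and, if it works, more elegant.

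However, the proposal has a real gap that you flag but do not close: it is not clear that Theorem~\ref{thm:von_luxburg}, as proved in \cite{vLRH14}, applies to graphs with self-loops. That proof proceeds via hitting times of the associated random walk, and the non-bipartite hypothesis is there precisely to control the bottom of the walk spectrum, not merely to make a definitional distinction. Your $G'$ is formally non-bipartite, but only barely: the largest normalized-Laplacian eigenvalue of $G'$ is $\tfrac{2d}{d+\alpha}$, so $2 - \lambda_{\max}(\hat L') = \tfrac{2\alpha}{d+\alpha} = O(w_{\max}/d)$. If the von~Luxburg--Radl--Hein error term implicitly depends on this lower spectral gap (as mixing-based arguments typically do) rather than on $\lambda_2$ alone, then the true error for $G'$ could be of order $1/d$ rather than $w_{\max}/d^2$, and your final inequality would not follow. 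To make the argument rigorous you would need either to verify from \cite{vLRH14} that their bound truly depends only on $\lambda_2$ even for nearly-bipartite multigraphs with self-loops, or to replace the self-loops by a perturbation that keeps $G'$ simple (e.g., a small-weight matching inside each side) and separately bound its effect on $R_{G'}$. The paper's longer route via $G_1$ sidesteps exactly this issue by producing a simple graph that is robustly non-bipartite.
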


\begin{proof}[Proof of \Cref{thm:main_resistance}]
Using \Cref{thm:main} we obtain a $d$-regular subgraph $H$ of $G$ with $\phi(H)>\frac{1}{3}$. 
By removing the obtained subgraph $H$ from $G$ and iterating,
we can apply the theorem $3d/\epsilon$ times and obtain disjoint subgraphs $H$.
Since $d=(\log n)^{O(1)}$ and $D=\Omega(n)$, the degree of $G$ does not significantly change in the process, and the requirements of \Cref{thm:main} continue to hold throughout the iterations (with a loss only in constants). Taking the union of the disjoint subgraphs produced in this process, we obtain a subgraph $H$ of $G$ which is $(3d^2/\epsilon)$-regular with $\phi(H)\geq d/\epsilon$. 
By the discrete Cheeger inequality,
\[
  \lambda_2(H) \geq
  \frac{1}{2}\left(\frac{\phi(H)}{\mathrm{deg(H)}}\right)^2 \geq
  \frac{1}{18d^2}.
\]

Recall that $H$ has edge weights in $\{1,2\}$. We now multiply each weight by $\epsilon D/(3d^2)$, rendering it $D$-regular in weighted degrees. This does not affect $\lambda_2(H)$ since it is an eigenvalue of the \emph{normalized} Laplacian.

Let $u,v\in V$. Apply \Cref{thm:von_luxburg_bipartite} on both $G$ and $H$. As $G$ is $D$-regular with $w_{\max}=1$ and $\lambda_2(G)=\Omega(1)$, we know that
$R_G(u,v) = \tfrac{2}{D} \pm O\left(\frac{1}{D^2}\right)$.
And as $H$ is $D$-regular with $w_{\max}=O(\frac{\epsilon D}{d^2})$ and $\lambda_2(H)=\Omega(1/d^2)$, we know that
$R_H(u,v) = \tfrac{2}{D} \pm O\left(\frac{\epsilon}{D}\right)$.
Putting these together, we get
$
  \frac{R_H(u,v)}{R_G(u,v)} 
  = 1\pm O\left(\epsilon+\frac{1}{D}\right) 
  = 1\pm O\left(\epsilon\right),
$
where the last equality holds for sufficiently large $n$ since $D=\Omega(n)$. Scaling $\epsilon$ down by the constant hidden in the last $O(\epsilon)$ notation yields the theorem.
\end{proof}

\subsection*{Acknowledgements}
We thank Uriel Feige for useful comments on this work.

\ifprocs
\bibliographystyle{abbrvurl}
\bibliography{resistance_lipics}
\else
\bibliographystyle{alphaurlinit}
\bibliography{resistance_lipics}
\fi

\appendix

\section{Appendix: Omitted Proofs}
\subsection{Proof of \Cref{thm:von_luxburg_bipartite}}
\label{sec:von_luxburg_bipartite}
In the non-bipartite case, \Cref{thm:von_luxburg_bipartite} follows from \Cref{thm:von_luxburg}. We henceforth assume that $G=(V,E,w)$ is bipartite with bipartition $V=V_1\cup V_2$. Note that since it is regular, we must have $|V_1|=|V_2|=\frac{1}{2}|V|$. Furthermore, as a weighted regular bipartite graph, $G$ is a convex combination of perfect matchings and hence is regular also in unweighed degrees. Let $d'$ denote the unweighted degree of each vertex in $G$. If $d'\leq2$ then it is easy to verify that the theorem holds (due to poor expansion), so we henceforth assume $d'\geq3$.

For brevity we denote the error term in \Cref{thm:von_luxburg} as 
\[ 
  \mathrm{err} 
  \eqdef 2 \left(\frac{1}{\lambda_2(G)} + 2\right) \frac{w_{\max}}{d^2}. 
\]
We will use the notion of \emph{hitting time}: For a pair of vertices $u,v$, the hitting time $H_G(u,v)$ is defined as the expected time it takes a random walk in $G$ that starts at $u$, to hit $v$. Define the \emph{normalized hitting time} $h_G(u,v)=\frac{1}{2W}H_G(u,v)$, where $W$ is the sum of all edge weights in $G$. We then have,
\begin{equation}\label{eq:resistance_hitting_time}
R_G(u,v) = h_G(u,v) + h_G(v,u).
\end{equation}
We will use the following bound on the normalized hitting time, which is given in the same theorem by von Luxburg, Radl and Hein~\cite{vLRH14}.
\begin{theorem}\label{thm:von_luxburg_hitting_time}
In the same setting of \Cref{thm:von_luxburg}, 
\[ 
  \forall u\neq v\in V,
  \qquad h_G(u,v) = \frac{1}{d_v}\pm \mathrm{err} . 
\]
\end{theorem}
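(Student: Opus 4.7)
The plan is a spectral analysis of the random walk on $G$. Non-bipartiteness ensures that all nontrivial eigenvalues $\lambda_i$ of the normalized Laplacian $\hat L$ satisfy $|1-\lambda_i|<1$, so the walk with transition matrix $P=D^{-1}A$ is aperiodic and its fundamental matrix $Z:=\sum_{t\geq 0}(P^t-\Pi)$ converges, where $\Pi$ projects onto the stationary distribution $\pi_v=d_v/(2W)$. The classical identity $H_G(u,v)=(Z_{vv}-Z_{uv})/\pi_v$, combined with $h_G=H_G/(2W)$, reduces the theorem to a sharp estimate of $Z_{vv}-Z_{uv}$.

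Next I would diagonalize $P$ via the eigenpairs $(\lambda_i,\phi_i)$ of $\hat L$: since $P=D^{-1/2}(I-\hat L)D^{1/2}$, its right and left eigenvectors are $D^{-1/2}\phi_i$ and $D^{1/2}\phi_i$ with eigenvalues $1-\lambda_i$, and summing the geometric series in $t$ gives
\[
Z_{uv}=\sum_{i\geq 2}\frac{1}{\lambda_i}\,\phi_i(u)\phi_i(v)\sqrt{d_v/d_u}.
\]
Using the orthonormality of $\{\phi_i\}$ and the explicit form $\phi_1=D^{1/2}\mathbf{1}/\sqrt{2W}$, one checks that $\sum_{i\geq 2}\phi_i(v)\bigl(\phi_i(v)-\sqrt{d_v/d_u}\phi_i(u)\bigr)=1$. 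Substituting $1/\lambda_i=1+(1/\lambda_i-1)$ thus extracts the baseline $1/d_v$ and leaves the residual
\[
h_G(u,v)-\frac{1}{d_v}\;=\;\sum_{i\geq 2}\Bigl(\frac{1}{\lambda_i}-1\Bigr)\,f_i(v)\bigl(f_i(v)-f_i(u)\bigr),\qquad f_i(v):=\phi_i(v)/\sqrt{d_v}.
\]

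The main obstacle will be bounding this residual by $O\bigl((1/\lambda_2)\,w_{\max}/d_{\min}^2\bigr)$. A direct Cauchy--Schwarz, using $\sum_{i\geq 2}f_i(v)^2\leq 1/d_v$ and $\sum_{i\geq 2}(f_i(v)-f_i(u))^2=1/d_u+1/d_v$, only yields $O((1/\lambda_2)/d_{\min})$, which is short of the target by a factor of $d_{\min}/w_{\max}$. To gain the missing factor, I would write the residual as the matrix coefficient $e_v^T M(e_v-e_u)$ with $M:=D^{-1/2}\bigl(\hat L^+-I+\phi_1\phi_1^T\bigr)D^{-1/2}$, and exploit the annihilation identity $Md=0$ (the degree vector $d=D\mathbf{1}$ lies in the kernel of $\hat L^+-I+\phi_1\phi_1^T$) to replace $e_v$ and $e_u$ by appropriate $d$-centered vectors without changing the form's value. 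A careful operator-norm bound on $M$ that tracks the per-step transition magnitude $w_{\max}/d_{\min}$ should then deliver the extra $w_{\max}/d_{\min}$ factor needed to match the claimed bound.
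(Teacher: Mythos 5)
First, note that the paper does not prove this statement at all: Theorem~\ref{thm:von_luxburg_hitting_time} is quoted verbatim from von Luxburg, Radl and Hein \cite{vLRH14} (``given in the same theorem''), so there is no in-paper argument to compare against. Your setup is nonetheless the right one and matches the source's framework: the fundamental-matrix identity $H_G(u,v)=(Z_{vv}-Z_{uv})/\pi_v$, the spectral expansion of $Z$ through the normalized Laplacian, and the extraction of the baseline $1/d_v$ via the completeness relation are all correct, and you correctly diagnose that Cauchy--Schwarz on the residual only yields $O\bigl((1/\lambda_2)/d_{\min}\bigr)$, short of the target by a factor $d_{\min}/w_{\max}$.

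The gap is that your proposed mechanism for recovering that factor does not work as described. The identity $Md=0$ is true, but it buys nothing: the $\ell_2$ distance from $e_v$ to the span of $d$ is $\sqrt{1-d_v^2/\norm{d}_2^2}$, which is essentially $1$, so replacing $e_v,e_u$ by their $d$-centered versions leaves their norms of order $1$; combined with the best available operator-norm bound $\norm{M}\le (1/\lambda_2)\cdot(1/d_{\min})$ you land exactly back at $O\bigl((1/\lambda_2)/d_{\min}\bigr)$. No $\ell_2$ operator-norm argument applied directly to the point masses can see the quantity $w_{\max}$. The device that actually closes the gap in \cite{vLRH14} is to peel off the first steps of the walk: writing the restriction $B$ of the transition operator to the complement of the top eigenvector, one expands $(I-B)^{-1}=I+B+B^2(I-B)^{-1}$. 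The $I$ and $B$ terms are bounded \emph{entrywise} using $P_{uv}=w_{uv}/d_u\le w_{\max}/d_{\min}$ (these produce the additive ``$+2$'' in $\mathrm{err}$), and for the tail one uses that a single application of $P$ shrinks a point mass in $\ell_2$: $\norm{P_{u\cdot}}_2^2\le\max_v P_{uv}\cdot\sum_v P_{uv}\le w_{\max}/d_{\min}$, so after hitting each of the two test vectors with one step of the walk their norms drop from $\Theta(1)$ to $O(\sqrt{w_{\max}/d_{\min}})$, and only then is the spectral-gap bound $\norm{(I-B)^{-1}}\le 1/\lambda_2$ invoked. That two-sided shrinkage is precisely the missing factor $w_{\max}/d_{\min}$. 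As written, your proof establishes only the weaker bound $O\bigl((1/\lambda_2)/d_{\min}\bigr)$, which would not suffice for the application in Section~\ref{sec:resistance}, where the full $w_{\max}/d_{\min}^2$ dependence is what makes the error $O(\epsilon/D)$ rather than $O(1/D)$.
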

(Like \Cref{thm:von_luxburg}, this theorem does not apply to bipartite graphs, and this is the obstacle we are now trying to circumvent.)

We begin by handling pairs of vertices contained within the same partition side, say $V_1$. We construct from $G$ a weighted graph $G_1$ on the vertex set $V_1$, with weights $w_1$, by putting
\[ \forall i\neq j\in V_1,\;\;\;\; w_1(i,j) = \frac{1}{d}\sum_{k\in V_2}w(i,k)w(j,k). \]
We argue that $H_{G_1}(u,v)=\frac{1}{2}H_G(u,v)$. This follows by observing that we set the weights $w_1$ such that for any $i,j\in V_1$, the probability to walk in one step from $i$ to $j$ in $G_1$ equals the probability to walk in two steps from $i$ to $j$ in $G$ via an intermediate node in $V_2$. Furthermore, we have normalized the weights $w_1$ such that $G_1$ is $d$-regular in weighted degrees. Recalling that $|V_1|=\frac{1}{2}|V|$, we have
\[ h_{G_1}(u,v)=\frac{1}{d|V_1|}H_{G_1}(u,v) = \frac{2}{d|V|}\cdot\frac{1}{2}H_G(u,v) = h_G(u,v). \]
Recalling that the unweighted degree in $G$ is $d'\geq3$, we see that by construction, $G_1$ contains a triangle and hence is non-bipartite. Hence we can apply to it \Cref{thm:von_luxburg_hitting_time} and obtain $h_{G_1}(u,v)=\frac{1}{d}\pm\mathrm{err}_1$, where $\mathrm{err}_1$ is the error term of $G_1$. Note that for every $i\neq j\in V_1$ we have $w_1(i,j)\leq\frac{w_{\mathrm{max}}}{d}\sum_{k\in V_2}w(i,k)=w_{\mathrm{max}}$, so the maximum edge weight in $G_1$ is bounded by $w_{\mathrm{max}}$, and $\lambda_2(G_1)\geq\lambda_2(G)$ (easy to verify by construction), so $\mathrm{err}_1\leq\mathrm{err}$, and we have $h_{G_1}(u,v)=\frac{1}{d}\pm\mathrm{err}$. Hence,
\[ h_G(u,v)=\frac{1}{d}\pm\mathrm{err} . \]
Recalling that $R_G(u,v) = h_G(u,v) + h_G(v,u)$, we have established that
\[ R_G(u,v) = \frac{2}{d}\pm2\mathrm{err} \]
for every pair $u,v\in V_1$. The same arguments hold for every pair $u,v\in V_2$ as well. We are left to handle the case $u\in V_1$, $v\in V_2$. Recalling the definition of hitting time, we have
\begin{align*}
H_G(u,v) &= 1+\frac{w(u,v)}{d}\cdot0+\sum_{x\in V_2\setminus\{v\}}\frac{w(u,x)}{d}H_G(x,v) & \text{(factoring out the first step)} &\\
&= 1+\frac{w(u,v)}{d}\cdot0+\sum_{x\in V_2\setminus\{v\}}\frac{w(u,x)}{d}\cdot 2W\cdot h_G(x,v) & &\\
&= 1+2W\sum_{x\in V_2\setminus\{v\}}\frac{w(u,x)}{d}\left(\frac{1}{d}\pm\mathrm{err}\right) & \text{(since $v,x\in V_2$)} &\\
&= 1+2W\left(1-\frac{w(u,v)}{d}\right)\left(\frac{1}{d}\pm\mathrm{err}\right).
\end{align*}
Therefore
\[ h_G(u,v) = \frac{1}{2W}+\left(1-\frac{w(u,v)}{d}\right)\left(\frac{1}{d}\pm\mathrm{err}\right), \]
which implies
\[ h_G \leq \frac{1}{2W}+\frac{1}{d}\pm\mathrm{err} \]
and
\[
  h_G(u,v) \geq
  \frac{1}{2W}+\left(1-\frac{w_{\max}}{d}\right)\left(\frac{1}{d}\pm\mathrm{err}\right) =
  \frac{1}{2W}+\frac{1}{d}\pm2\mathrm{err}.
\]
Together, $h_G(u,v)=\frac{1}{d}+\frac{1}{2W}\pm2\mathrm{err}$. Now, since for an arbitrary vertex $i$ we have
\[ d = \mathrm{deg}(i) = \sum_{j\in V}w(i,j) \leq nw_{\max}, \]
we see that $\frac{1}{2W}=\frac{1}{nd}\leq\frac{w_{\max}}{d^2} \leq \mathrm{err}$ and hence
\[ h_G(u,v)=\frac{1}{d}\pm3\mathrm{err}. \]
Plugging this into $R_G(u,v) = h_G(u,v) + h_G(v,u)$, we find
\[ R_G(u,v)=\frac{2}{d}\pm6\mathrm{err}, \]
which completes the proof of \Cref{thm:von_luxburg_bipartite}. \qed

\subsection{Further Omitted Proofs}\label{sec:omitted_proofs}
\begin{proposition}\label{prp:hall_consequence}
Let $G(V,U;E)$ be a bipartite graph on $n$ nodes with $|V|=|U|=\frac{1}{2}n$, and minimum degree $\geq\frac{1}{4}n$. Then $G$ contains a perfect matching.
\end{proposition}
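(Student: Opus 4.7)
The plan is to verify Hall's marriage condition for $G$ and then invoke Hall's theorem. That is, I would show that for every subset $S\subseteq V$, the neighborhood $N(S)\subseteq U$ satisfies $|N(S)|\geq |S|$ (and symmetrically when starting from a subset of $U$, though by symmetry one direction suffices for a perfect matching when $|V|=|U|$).

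To check Hall's condition, I would split into two cases according to the size of $S$. If $|S|\leq \tfrac{1}{4}n$, then picking any single vertex $v\in S$ already yields $|N(S)|\geq |N(v)|\geq \tfrac14 n\geq |S|$, where the minimum-degree hypothesis supplies the middle inequality. If instead $|S|>\tfrac14 n$, I would argue by contradiction that $N(S)=U$: suppose some $u\in U\setminus N(S)$; then every neighbor of $u$ lies in $V\setminus S$, so $\deg(u)\leq |V\setminus S|=\tfrac12 n-|S|<\tfrac14 n$, contradicting the minimum-degree assumption. Hence $N(S)=U$, giving $|N(S)|=\tfrac12 n\geq |S|$ since $S\subseteq V$ and $|V|=\tfrac12 n$.

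Having verified Hall's condition, the existence of a perfect matching follows immediately from Hall's marriage theorem. There is no real obstacle here; the only thing to keep track of is that the threshold $\tfrac14 n$ in the degree bound exactly matches the arithmetic in the second case (namely $\tfrac12 n - \tfrac14 n = \tfrac14 n$), which is what makes the hypothesis tight and the argument go through cleanly.
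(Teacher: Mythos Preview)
Your proposal is correct and matches the paper's proof essentially verbatim: both verify Hall's condition via the same two-case split at the threshold $|S|=\tfrac14 n$, using a single vertex's neighborhood in the small case and the observation that every $u\in U$ must have a neighbor in $S$ in the large case. The only cosmetic difference is that you phrase the large case as a contradiction, while the paper states it directly.
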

\begin{proof}
Let $S\subset V$ be non-empty, and denote $N(S)\subset U$ the set of nodes with a neighbor in $S$. If $|S|\leq\frac{1}{4}n$ then since any $v\in S$ has $\frac{1}{4}n$ neighbors in $U$, we have $|N(S)|\geq N(\{v\})\geq\frac{1}{4}n\geq|S|$. If $|S|>\frac{1}{4}n$ then by the minimum degree condition on side $U$, every $u\in U$ must have a neighbor in $S$, and hence $|N(S)|=|U|=|V|\geq|S|$. The same arguments apply for $S\subset U$, so the condition of Hall's Marriage Theorem is verified, and it implies that $G$ contains a perfect matching.
\end{proof}

\begin{proposition}\label{prp:dense_set_cover}
Consider an instance of Set Cover with a set $S$ of $n$ elements, and a family $\mathcal M$ of subsets of $S$. Suppose each $x\in S$ belongs to at least a $\mu$-fraction of the subsets in $\mathcal M$. Then for sufficiently large $n$, we can efficiently find a cover $M\subset\mathcal M$ with $|M|\leq\frac{1.1}{\mu}\log n$.
\end{proposition}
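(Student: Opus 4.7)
\medskip

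\noindent\textbf{Proof plan.} My plan is to prove the proposition by the standard probabilistic argument: sample $k \eqdef \lceil \tfrac{1.1}{\mu}\log n\rceil$ subsets from $\mathcal M$ uniformly and independently (with replacement), and show that the resulting collection $M$ is a cover with high probability. The algorithm is then simply this random sampling procedure, which is efficient; to convert it into a deterministic guarantee one can either accept the \whp guarantee (sufficient for the uses in the paper, which are already randomized) or verify the cover after sampling and resample on the rare failure event.

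\smallskip

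\noindent\textbf{Key steps.} Fix an element $x \in S$. By hypothesis, a uniformly random subset from $\mathcal M$ contains $x$ with probability at least $\mu$, so it fails to cover $x$ with probability at most $1-\mu$. Since the $k$ samples are independent,
\[
  \Pr[\,x \text{ is not covered by } M\,] \;\leq\; (1-\mu)^k \;\leq\; e^{-\mu k} \;\leq\; e^{-1.1 \log n} \;=\; n^{-1.1}.
\]
Taking a union bound over all $n$ elements of $S$,
\[
  \Pr[\,M \text{ is not a cover}\,] \;\leq\; n \cdot n^{-1.1} \;=\; n^{-0.1},
\]
which tends to $0$ as $n\to\infty$. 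Hence for sufficiently large $n$ the random sample $M$ is a cover with probability at least, say, $1/2$, and $|M| \le k \le \tfrac{1.1}{\mu}\log n$ as required. After drawing $M$ one checks in time $O(kn)$ whether every element of $S$ is covered; if not, one resamples, and the expected number of rounds is $O(1)$.

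\smallskip

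\noindent\textbf{Main obstacle.} There is essentially no obstacle: the only mild care needed is to make sure the constant in front of $\tfrac{1}{\mu}\log n$ is strictly larger than $1$, so that the union bound beats the $n$ factor. Choosing the constant $1.1$ (instead of $1$) gives the slack $n^{-0.1}$, which suffices for ``sufficiently large $n$''. Any constant strictly greater than $1$ would work, at the cost of requiring a larger threshold for $n$.
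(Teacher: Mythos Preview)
Your proof is correct and follows essentially the same approach as the paper: sample $\Theta(\tfrac{1}{\mu}\log n)$ subsets uniformly at random with replacement, bound the probability a fixed element is uncovered by $(1-\mu)^k\le e^{-\mu k}$, and take a union bound over the $n$ elements. The only (harmless) slip is writing $k=\lceil\tfrac{1.1}{\mu}\log n\rceil\le\tfrac{1.1}{\mu}\log n$, which goes the wrong way; this is immaterial since for large $n$ one can absorb the rounding into the constant.
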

\begin{proof}
Pick $q$ uniformly random sets (with replacement) from $\mathcal M$ to form $M$. The probability that a given element in $S$ is not covered by $M$ is upper-bounded by $(1-\mu)^q$. Taking a union bound over the element, we need to ensure that $n(1-\mu)^q<1$ in order to ensure that with constant probability, $M$ is a solution to the given Set Cover instance. This can be achieved by $q\leq\frac{1.1}{\mu}\log n$.
\end{proof}

\end{document}